\pgfplotsset{compat=1.16}
\newcommand{\XobsRandom}{\boldsymbol{X}_{\mathrm{obs}}}
\newcommand{\Xobsinit}{X_\mathrm{obs}[0]}
\newcommand{\ProbObs}{\mathbb{P}_{\XobsRandom}^{t,\Xobsinit}}
\newcommand{\SafeSet}{\mathcal{S}}
\newcommand{\algName}{\texttt{Safely}}
\newcommand{\xsqp}{{\{x_\mathrm{sqp}[t]\}}_{t=0}^T}
\newcommand{\xpr}{{\{x_\mathrm{pr}[t]\}}_{t=0}^T}
\newcommand{\oRange}{\mathbb{N}_{[1,N_{o}]}}
\newtheorem{assumption}{Assumption}
\newtheorem{prop}{Proposition}
\newtheorem{prob}{Problem}
\newtheorem{lem}{Lemma}
\title{
\texttt{Safely}: Safe Stochastic Motion Planning Under Constrained Sensing via Duality
}
\author{Michael Hibbard$^{1}$, Abraham P. Vinod$^{2}$, Jesse Quattrociocchi$^{1}$, and Ufuk Topcu$^{1}$

\thanks{$^{1}$M. Hibbard, J. Quattrociocchi, and U. Topcu are with the Department of Aerospace and Engineering Mechanics at the University of Texas at Austin.}%
\thanks{$^{2}$A.P. Vinod is with Mitsubishi Electric Research Laboratories.}%
}
\begin{document}

\maketitle
\thispagestyle{empty}
\pagestyle{empty}

\begin{abstract}
Consider a robot operating in an uncertain environment with stochastic, dynamic obstacles. 
Despite the clear benefits for trajectory optimization, it is often hard to keep track of each obstacle at every time step due to sensing and hardware limitations.
We introduce the $\algName{}$ motion planner, a receding-horizon control framework, that simultaneously synthesizes both a trajectory for the robot to follow as well as a sensor selection strategy that prescribes trajectory-relevant obstacles to measure at each time step while respecting the sensing constraints of the robot.
We perform the motion planning using sequential quadratic programming, and prescribe obstacles to sense based on the duality information associated with the convex subproblems.
We guarantee safety by ensuring that the probability of the robot colliding with any of the obstacles is below a prescribed threshold at every time step of the planned robot trajectory.
We demonstrate the efficacy of the $\algName{}$ motion planner through software and hardware experiments.

\end{abstract}

\section{INTRODUCTION}
\begin{figure}[t]
    \centering
    \begin{subfigure}[t]{0.999\columnwidth}
        \centering
        \scalebox{0.50}{
        \includegraphics{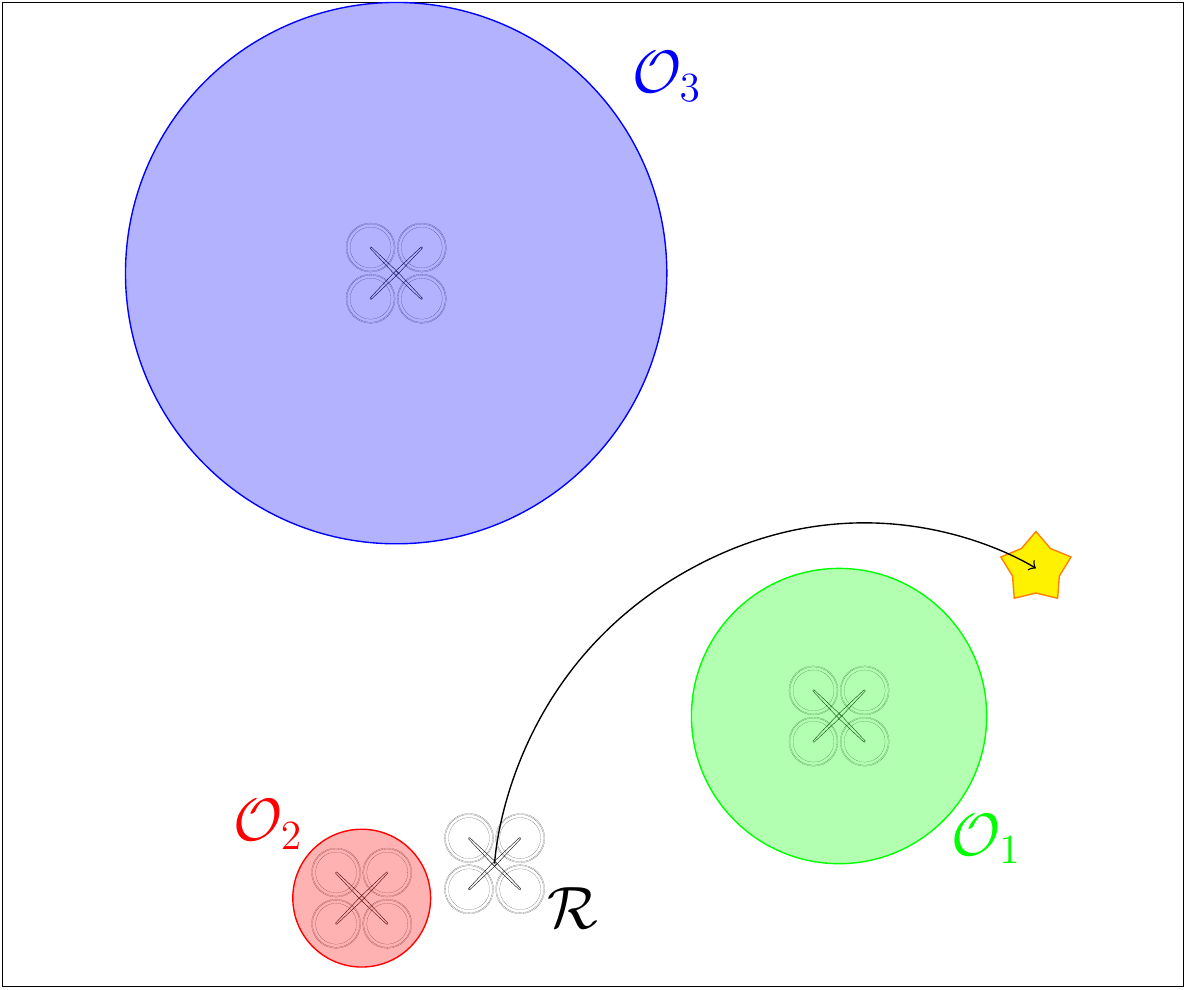}}
        \caption{Only obstacle $\mathcal{O}_1$ influences the optimal trajectory of the robot.}
        \label{fig:tradeoff}
    \end{subfigure}
    \begin{subfigure}[t]{0.999\columnwidth}
        \centering\scalebox{0.88}{
        \includegraphics{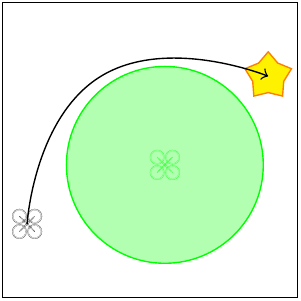}}
        \centering\scalebox{0.88}{
        \includegraphics{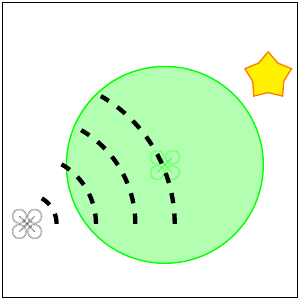}}
        \centering\scalebox{0.88}{
        \includegraphics{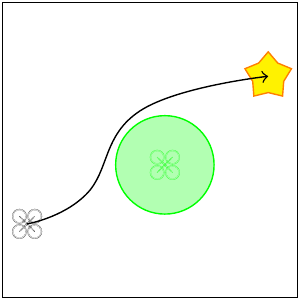}}
        \caption{Reducing the uncertainty associated with obstacle $\mathcal{O}_1$ enables a significant improvement in performance.}
        \label{fig:make_obs}
    \end{subfigure}
    \caption{The proposed algorithm, \algName{}, utilizes convex optimization and duality theory to simultaneously design both a safe motion plan that avoids the uncontrolled, dynamic obstacles in the environment, as well as a constrained sensor allocation that reduces the uncertainty in the underlying state of the obstacles that have the highest impact on performance.}
\end{figure}

We study the motion planning problem of a robot that operates in an environment populated by dynamic obstacles and has limited sensing capabilities. Despite these limitations, the robot seeks to remain safe by avoiding collisions with these obstacles.
Several relevant applications of such a problem include food delivery robots that must reach their destination while avoiding pedestrians and
self-driving cars that must generate real-time motion plans while remaining cognizant of other road vehicles.
For the purpose of motion planning, a popular method to predict the motion of the dynamic, uncontrolled obstacles present in the environment utilizes stochastic, dynamical models. 
Such models may be obtained from a combination of data-driven and model-driven approaches~\cite{bar2004estimation,rasmussen2003gaussian,aoude2013probabilistically}.
Without real-time measurements, the future obstacle state predictions obtained from these models typically have high variance, requiring overly conservative robot motion plans in order to satisfy the safety requirements.
However, reducing uncertainty through measurements of the underlying state of each obstacle at every time step is often impractical due to sensing and hardware limitations.
In the case of a food delivery robot, maintaining awareness of all pedestrians at all times can drive up hardware costs, perception-related computational costs, and planning time for the robot, resulting in poor reaction time and performance losses. 
To strike a balance between these extremes, we propose a \emph{motion planning algorithm that simultaneously designs a safe trajectory for the robot to complete its task and allocates its limited available sensors to reduce the uncertainty of the obstacles most relevant to the trajectory}.

We measure the relevance of an obstacle in terms of how the performance of the robot is affected by the uncertainty in the obstacle localization.
In Figure~\ref{fig:tradeoff}, the task of the drone $\mathcal{R}$, which we can control, is to reach its target state at the yellow star as quickly as possible.
The shaded circles represent areas in the environment where an obstacle resides with high probability, as determined by the stochastic model of the obstacle and past measurements.
By making a measurement about the underlying state of obstacle $\mathcal{O}_1$, and thereby reducing the uncertainty in its position, the robot can further reduce the time required to reach the target state, as shown in Figure~\ref{fig:make_obs}. 
Clearly, measuring the underlying states of obstacles $\mathcal{O}_2$ and $\mathcal{O}_3$ will have no effect on the performance of the drone, and the limited sensing resources available should not be spent on reducing the uncertainty associated with these obstacles.

Our main contribution is a novel stochastic motion planner, which we name $\algName{}$. Using $\algName{}$, we design a safe trajectory for the robot in the stochastic environment and identify the obstacles for which sensing resources should be allocated based on their influence on the performance of the currently planned trajectory. 
We ensure probabilistic safety in motion planning via probabilistic occupancy~\cite{vinod2018probabilistic,vinod2018stochastic}, which provides outer-approximations to the keep-out sets in the form of ellipsoids.
We formulate the stochastic motion planning problem as a nonlinear program, which we then solve using sequential quadratic programming~\cite{boggs1995sequential}.
We exploit the connection between duality theory for convex optimization and sensitivity analysis to design a constrained sensor allocation strategy that identifies the most relevant obstacles while respecting the sensing limitations.
\algName{} uses a Kalman filter \cite{bar2004estimation} to update the obstacle state estimates after measurement, wherein we assume linear sensor measurements and that the obstacles follow linear dynamics subject to Gaussian disturbances.

\textbf{Related work.} 
Perhaps the research most closely related to this paper is that of \cite{ghosh2017joint}, in which an agent checks only for the obstacles that intersect its planned trajectory. 
Similarly, \cite{Axelrod-RSS-17} provides probabilistic safety guarantees for uncertain obstacles.
However, these works consider only static obstacles and do not incorporate future uncertainty into the sensor selection process of the robot, whereas \algName{} does not have such restrictions.
The work of \cite{patil2014gaussian} likewise considers the relation between sensing and acting, but only considers the relevance of an obstacle in terms of the uncertainty in its state position and ignores its impact on the future planned trajectory.
Furthermore, \cite{tanaka2017lqg} studies the problem of minimizing the directed information from the plant output to the control input. However, this work does not consider this problem in the context of obstacle avoidance as we do.

More generally, our work is reminiscent of active perception \cite{bajcsy1988active}, wherein an agent behaves in such a way as to gain information by controlling what it senses. However, the existing approaches for perception-related tasks often do not account for safety constraints.
Moreover, the proposed approach of utilizing the dual variables of obstacle avoidance constraints to prescribe the sensor selection strategy of the robot is novel, to the best of our knowledge. 
Recently, researchers have looked into deep learning to design perceptual control policies that account for attention~\cite{pmlr-v100-lee20b}, but enforce safety only by handing over the control to an expert or stopping before the obstacle. Such safety strategies are not feasible for autonomous robots that operate independently of an expert.
Existing literature has also proposed receding-horizon control frameworks for the stochastic motion planning problem under the assumption of full-state feedback of the obstacles~\cite{vinod2018stochastic,du2011robot}.

The remainder of the paper is structured as follows. In Section~\ref{sec:prelim_and_prob}, we begin by formulating the stochastic motion planning problem of the robot and how it can utilize the measurement of an obstacle state to improve its performance. In Section~\ref{sec:motion_planner}, we propose the \algName{} motion planner using sequential quadratic programming and probabilistic occupancy, and motivate the use of the dual variables in the sensor selection strategy using sensitivity analysis. Finally, in Section \ref{sec:num_ex}, we provide software and hardware experiments demonstrating the benefit of incorporating the \algName{} motion planner for obstacle avoidance.

\section{PRELIMINARIES and PROBLEM FORMULATION}\label{sec:prelim_and_prob}

We denote the set of real numbers by $\mathbb{R}$ and the set of all positive real numbers by $\mathbb{R}_{+}$.
Additionally, we denote the set of natural numbers by $\mathbb{N}$, and we use $\mathbb{N}_{[a,b]}$ to denote a sequence of natural numbers $\{a,a+1,\ldots,b\}$.
Finally, we denote the set of all positive semidefinite matrices of size $n \times n$ by $\mathbb{S}_{+}^{n}$.
Furthermore, let $\mathbf{0}_{m\times n}$ be an $m$$\times$$n$ matrix of zeros and let $\mathbf{I}_{n}$ be the $n$-dimensional identity matrix.
For two sets $\mathcal{A}$ and $\mathcal{B}$, we use the operator $\oplus$ to denote the Minkowski sum $\mathcal{A}$$\oplus$$\mathcal{B}$$=$$\{a$$+$$b|a$$\in$$\mathcal{A}$$,$$b$$\in$$\mathcal{B}\}$. Throughout the paper, we will use $\|\cdot\|$ to denote the Euclidean norm, and we denote the standard Euclidean ball by $\mathrm{Ball}(\mu,r)\triangleq\{x\in \mathbb{R}^{n}: ||x-\mu||\leq r\}$.

\subsection{Robot and environment model}

Consider a robot operating in an environment where it has the discrete-time deterministic dynamics%
\begin{equation}\label{eq:agent_dynamics}
    x[t+1] = f(x[t],u[t]),
\end{equation}
with state $x[t]\in \mathbb{R}^{n}$, input $u[t]\in\mathcal{U} \subset \mathbb{R}^{m}$, and known nonlinear, differentiable function $f:\mathbb{R}^{n+m}\rightarrow \mathbb{R}^{n}$. The goal of the robot is to reach $\epsilon$-close to a specified target state $x_{g}\in \mathbb{R}^n$. Additionally, the robot must not collide with the $N_O\in\mathbb{N}$ dynamic obstacles present in the environment. For any $o\in\mathbb{N}_{[1,N_O]}$, the $o^\mathrm{th}$ obstacle has the following known, discrete-time, stochastic, uncontrolled, linear dynamics, 
\begin{equation}\label{eq:obstacle_dynamics}
    x_{o}[t+1] = A_{o}x_{o}[t] + B_{o}w_{o}[t],
\end{equation}
with state $x_o[t]\in\mathbb{R}^n$ and disturbance $w_{o}[t]\sim\mathcal{N}(\mu_{w_{o}},\Sigma_{w_{o}})$.
Here, the matrices $A_{o}\in\mathbb{R}^{n\times n}$ and $B_{o}\in\mathbb{R}^{n\times p}$, respectively, while the vector $\mu_{w_o}\in \mathbb{R}^{p}$ and $\Sigma_{w_o}\in\mathbb{S}^{p}_{+}$ encode our known prior belief about the motion of the $o^\mathrm{th}$ obstacle.
Furthermore, we assume the robot has access to a stochastic prior belief over the initial state of the $o^\mathrm{th}$ obstacle, $x_o[0]\sim\mathcal{N}(\mu_{o}[0],\Sigma_{o}[0])$, where $\mu_{o}\in\mathbb{R}^{n}$ and $\Sigma_{o}[0]\in\mathbb{S}_{+}^{n}$.
Such a model for the obstacle dynamics is common in the Kalman filter literature \cite{bar2004estimation}.

%
We assume that if the robot allocates sensor resources to observe obstacle $o$ at time $t$, then the robot makes an observation according to the linear sensor model
\begin{align}\label{eq:linear_sensor_model}
    z_{o}[t+1] = H x_{o}[t] + \nu[k],
\end{align}
where $z_{o}[t]\in\mathbb{R}^{q}$, $H\in\mathbb{R}^{q \times n}$, and $\nu\sim\mathcal{N}(\mu_{\nu},\Sigma_{\nu})$ is the measurement noise, with $\mu_{\nu} \in \mathbb{R}^{q}$ and $\Sigma_{\nu} \in \mathbb{S}_{+}^{q}$.
In this paper, we assume $\mu_{\nu}=\mathbf{0}_{q\times1}$.
Given this observation, the robot updates its estimate $\hat{x}_{o}[t+1]$ of $x_{o}[t+1]$ as well as the covariance $\Sigma_{o}[t+1]$ according to the standard Kalman filter update \cite{bar2004estimation},
\begin{subequations}
\begin{align}
    x_{o}^{-}[t+1] & = A_{o} \hat{x}_{o}[t] + B_{o} \mu_{w_{o}}, \label{eq:xtp1_minus} \\
    \Sigma_{o}^{-}[t+1] & = A_{o} \Sigma_{o}[t] A_{o}^{\top} + B_{o} \Sigma_{w_{o}} B_{o}^{\top}, \label{eq:sigmatp1_minus} \\
    K_{o}[t] & = \Sigma_{o}^{-}[t+1] H^{\top} \left( H \Sigma_{o}^{-}[t+1] H^{\top} + \Sigma_{\nu} \right)^{-1}, \label{eq:kf_gain} \\
    \hat{x}_{o}[t+1] & = x_{o}^{-}[t+1] - K_{o}[t] (z_{o}[t+1] - H \hat{x}_{o}[t]), \label{eq:xtp1} \\
    \Sigma_{o}[t+1] & = (\mathbf{I}_{n} - K_{o}[t]H) \Sigma_{o}^{-}[t+1]. \label{eq:sigmatp1}
\end{align}
\end{subequations}
Furthermore, in the absence of observations, we can recursively express $x_{o}[t]$, the obstacle state at a future time instant $t\in\mathbb{N}$, as
\begin{subequations}
\begin{align}
    x_{o}[t] &\sim\mathcal{N}(x_{o}^{-}[t],\Sigma^{-}_{o}[t]),  \label{eq:gauss_x_t}\\
    x_{o}^{-}[t] & = A_{o} x_{o}^{-}[t-1] + B_{o} \mu_{w_o} , \label{eq:gauss_mean_t} \\
    \Sigma_{o}^{-}[t] & = A_{o} \Sigma_{o}^{-}[t-1] A_{o}^{\top} + B_{o} \Sigma_{w_{o}} B_{o}^{\top}, \label{eq:gauss_sig_t}
\end{align}\label{eq:gauss_obs}%
\end{subequations}
%
which again follows from Kalman filter theory.
For ease of notation, we let $\mu_{o}[t] \triangleq x_{o}^{-}[t]$ and we denote the configuration of the obstacle set at a time step $t $$ \in $$ \mathbb{N}$ using the concatenated obstacle state vector $\XobsRandom[t] $$ \triangleq$$ \left[x_{1}^{\top}[t],\ldots,x_{N_O}^{\top}[t]\right]^{\top} \in \mathbb{R}^{(N_O n)}$ for which we define its associated probability measure as $\ProbObs$, where $\Xobsinit$ is the known initial obstacle state vector, and the obstacle state and uncertainty propagate according to~\eqref{eq:gauss_obs}.

\subsection{Optimal control problem with probabilistic safety}

We now discuss the optimal control problem of the robot. We start with the following simplifying assumption:
\begin{assumption}\label{assum:translate}
The robot and each of the $N_{o}$ obstacles are restricted to translational motion, and their rigid bodies do not rotate.
\end{assumption}
Assumption~\ref{assum:translate} is common in robotics applications, and can be relaxed to include rotational motion by considering symmetric covers for the rigid bodies~\cite[Sec 4.3.2]{lavalle2006planning}\cite{vinod2018probabilistic,du2011robot}.

Let the sets $\mathcal{R}_\mathrm{robot}$ and $\mathcal{R}_o$ for every $o\in\mathbb{N}_{[1,N_O]}$, denote the rigid bodies of the robot and the obstacles, respectively.
Under Assumption~\ref{assum:translate}, for a given robot trajectory $\{x[t]\}_{t=0}^T \triangleq \{ x[0],x[1],\ldots,x[T] \}$ of length $T\in\mathbb{N}$, the probability of collision with any of the $N_O$ obstacles over the course of the entire trajectory of the robot is given by the function $\mathrm{CollidePr}:\mathbb{R}^{(nT)}\to [0,1]$, where
\begin{align}
    &\mathrm{CollidePr}(x[0],\ldots,x[T];\Xobsinit) \nonumber\\
    &\quad \triangleq\ProbObs\{\cup_{o=1}^{N_O}\cup_{t=1}^{T} \{x_o[t] \in x[t] \oplus \mathcal{B}_{o}\}\}, \label{eq:collision_pr}
\end{align}
with sets $\mathcal{B}_o\triangleq \mathcal{R}_\mathrm{robot} \oplus (-\mathcal{R}_o) \subset \mathbb{R}^n$ for all $o\in\mathbb{N}_{[1,N_O]}$\cite{vinod2018probabilistic}. 

For some performance metric $J:\mathbb{R}^{((n+m)T)}\to \mathbb{R}$ and a user-defined safe operation zone $\SafeSet\subseteq \mathbb{R}^n$, we aim to solve the following optimal control problem of the robot,
\begin{subequations}
    \begin{align}
        \min & \quad J(x[1],\ldots,x[T],u[0],\ldots,u[T-1]) \label{eq:opt_cost_func} \\
        \mathrm{s.t.} &\quad u[\cdot] \in \mathcal{U}, x[\cdot] \in \SafeSet, \label{eq:opt_prob_basic_constraints} \\
        &\quad \text{Dynamics \eqref{eq:agent_dynamics} and \eqref{eq:obstacle_dynamics}}, \label{eq:opt_dynamics}\\
        &\quad\mathrm{CollidePr}(x[0],\ldots,x[T];\Xobsinit) \leq \alpha, \label{eq:opt_collide_prob}
    \end{align}\label{prob:orig_opt}%
\end{subequations}%
with decision variables $x[1],\ldots,x[T]$ and $u[0],\ldots,u[T-1]$. Here, $\alpha$$\in$$[0,1]$ is a (small) user-specified upper bound on the acceptable collision probability.
The cost function $J$ is assumed to be twice continuously differentiable and can represent a variety of performance metrics, such as quickly reaching the target state.
%
\begin{assumption}\label{assum:ball}
For every $o\in\mathbb{N}_{[1,N_O]}$, $\mathcal{B}_o=\mathrm{Ball}(c_o, r_o)$, for some center $c_o \in \mathbb{R}^n$ and some radius $r_o>0$.
\end{assumption}
Assumption~\ref{assum:ball} is not overly restrictive, since the rigid bodies of the robot and the obstacles typically have finite volume, which result in  bounded $\mathcal{B}_o$. 
Furthermore, we consider a covering hypersphere $\mathcal{B}_0\supseteq \mathcal{R}_\mathrm{robot} \oplus (-\mathcal{R}_o)$ when $\mathcal{R}_\mathrm{robot} \oplus (-\mathcal{R}_o)$ is not a hypersphere to satisfy Assumption~\ref{assum:ball}, which tightens the probabilistic safety constraint \eqref{eq:opt_collide_prob}.

\subsection{Problem formulation}

We now state the problem of interest for this paper:
\begin{prob}\label{prob_st:prob}
    Consider the sensing constraint that the robot can measure the underlying state of only $K<N_O$ obstacles at any time step. Design a motion planning algorithm that determines a safe trajectory for the robot with stochastic, dynamic obstacles, 
    while using the limited sensing resources to reduce the conservativeness in the future motion plans of the robot. 
\end{prob}

The sensor allocation constraint in Problem~\ref{prob_st:prob} is motivated by the sensing limitations of mobile robots with on-board computational hardware used for processing perception data.
The robot must balance the increasing risk of collision with obstacles due to the increasing uncertainty in the obstacle states with the performance of the future motion plans.  
Any motion planner that addresses Problem~\ref{prob_st:prob} is naturally amenable to a receding-horizon framework that further improves the tractability of the motion planning problem.

\section{The \algName{} motion planner}\label{sec:motion_planner}

In this section, we introduce the \algName{} motion planner to address Problem~\ref{prob_st:prob}. We first formulate the safe, observation-free motion planning problem of the robot.
We then analyze the utility of observing a particular obstacle through the lens of duality and sensitivity analysis, and propose a constrained sensor scheduling strategy. 
We conclude with a brief discussion of the numerical implementation of \algName{}.


\subsection{Nonlinear stochastic motion planning}
\label{sub:dc_mp}

In this section, we begin by discussing ellipsoidal outer-approximations of the keep-out sets, whose avoidance guarantees the desired probabilistic safety in \eqref{eq:opt_collide_prob}. These outer-approximations follow from probabilistic occupancy functions~\cite{vinod2018probabilistic}, and generalize sufficient conditions discussed in~\cite{vinod2018stochastic} for a conservative enforcement of joint chance constraints  \eqref{eq:opt_collide_prob}. We formalize these connections in Proposition~\ref{prop:1}.

\begin{prop}\label{prop:1}
    A sufficient condition to satisfy \eqref{eq:opt_collide_prob} is the following collection of constraints
    %
   \begin{align}
        (x[t]-\mu_{o}[t])^{\top} (Q_{o}^{+}[t])^{-1} (x[t]-\mu_{o}[t]) \geq 1 \label{eq:reverse_cvx_coll}
   \end{align}
    for each obstacle $o\in\mathbb{N}_{[1, N_O]}$ and each time step $t \in \mathbb{N}_{[1,T]}$. Here, $Q_{o}^{+}[t]\in\mathbb{R}^{n\times n}$ are positive definite matrices defined by the stochastic dynamics of the corresponding obstacle and a user-specified unit vector $l_{0}\in\mathbb{R}^n$,
    \begin{subequations}
    \begin{align}
        Q_{o}^{+}[t] &= \left( \sqrt{l_{0}^{\top}Q_{o}[t]l_{0}} + r_{o} \right) \left( \frac{Q_{o}[t]}{\sqrt{l_{0}^{\top}Q_{o}[t]l_{0}}} + r_o I_{n} \right), \label{eq:Q_matrix_plus} \\
        Q_{o}[t] &= -2\log\left( \frac{\alpha \sqrt{|2\pi\Sigma_{o}[t]|}}{TN_O\mathrm{Volume}(\mathrm{Ball}(c_o,r_{o}))}\right) \Sigma_{o}[t]. \label{eq:Q_matrix}
    \end{align}\label{eq:Q_defns}%
    \end{subequations}
\end{prop}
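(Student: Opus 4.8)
The plan is to prove the sufficiency in three stages: decouple the joint chance constraint \eqref{eq:opt_collide_prob} into one constraint per obstacle and per time step via a union bound, bound each resulting single-event probability by a Gaussian density estimate (probabilistic occupancy~\cite{vinod2018probabilistic}), and finally replace the non-ellipsoidal keep-out region by an ellipsoidal outer approximation so that the certificate collapses to the single reverse-convex inequality \eqref{eq:reverse_cvx_coll}.

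First I would apply Boole's inequality to \eqref{eq:collision_pr}, obtaining
\[\mathrm{CollidePr}(x[0],\ldots,x[T];\Xobsinit)\le \sum_{o=1}^{N_O}\sum_{t=1}^{T}\ProbObs\{x_o[t]\in x[t]\oplus\mathcal{B}_o\}.\]
Hence it suffices to enforce $\ProbObs\{x_o[t]\in x[t]\oplus\mathcal{B}_o\}\le \alpha/(TN_O)$ for every $o$ and $t$, which explains the factor $TN_O$ in \eqref{eq:Q_matrix}. Since $x[t]$ is a deterministic decision variable, $x_o[t]\sim\mathcal{N}(\mu_o[t],\Sigma_o[t])$ by \eqref{eq:gauss_obs}, and $\mathcal{B}_o=\mathrm{Ball}(c_o,r_o)$ by Assumption~\ref{assum:ball}, each event is $x_o[t]\in\mathrm{Ball}(x[t]+c_o,r_o)$; I take $c_o=\mathbf{0}$ to match the centering in \eqref{eq:reverse_cvx_coll}, a nonzero offset merely translating the ellipsoid center by $-c_o$.

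Next I would bound each term by the volume of the ball times the maximum of the Gaussian density over it,
\[\ProbObs\{x_o[t]\in\mathrm{Ball}(x[t],r_o)\}\le \mathrm{Volume}(\mathrm{Ball}(c_o,r_o))\cdot\frac{\exp(-d^2/2)}{\sqrt{|2\pi\Sigma_o[t]|}},\]
where $d^2$ is the minimum Mahalanobis distance (with respect to $\Sigma_o[t]$) from $\mu_o[t]$ to the ball. Forcing the right-hand side below $\alpha/(TN_O)$ and taking logarithms converts the requirement into $d^2\ge \beta_o[t]$ with $\beta_o[t]=-2\log\big(\alpha\sqrt{|2\pi\Sigma_o[t]|}/(TN_O\,\mathrm{Volume}(\mathrm{Ball}(c_o,r_o)))\big)$. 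Comparing with \eqref{eq:Q_matrix} we see $Q_o[t]=\beta_o[t]\Sigma_o[t]$, so the $\beta_o[t]$-level confidence set is exactly the ellipsoid $\mathcal{E}_o\triangleq\{y:(y-\mu_o[t])^\top Q_o[t]^{-1}(y-\mu_o[t])\le 1\}$, and positive definiteness of $Q_o^+[t]$ forces $\beta_o[t]>0$, i.e. the small-$\alpha$ regime. The condition $d^2\ge\beta_o[t]$ is precisely the statement that $\mathrm{Ball}(x[t],r_o)$ does not intersect $\mathcal{E}_o$, equivalently that $x[t]$ lies outside the Minkowski sum $\mathcal{E}_o\oplus\mathrm{Ball}(\mathbf{0},r_o)$.

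The final and most delicate step is to outer-approximate this sum by an ellipsoid so that avoidance becomes \eqref{eq:reverse_cvx_coll}. Using support functions, for which $h_{\mathcal{E}_o}(l)=\sqrt{l^\top Q_o[t]l}$ and $h_{\mathrm{Ball}(\mathbf{0},r_o)}(l)=r_o\|l\|$, I would verify that $Q_o^+[t]$ from \eqref{eq:Q_matrix_plus} satisfies $\sqrt{l^\top Q_o^+[t]\,l}\ge \sqrt{l^\top Q_o[t]l}+r_o\|l\|$ for all $l$, which yields $\mathcal{E}_o\oplus\mathrm{Ball}(\mathbf{0},r_o)\subseteq\{y:(y-\mu_o[t])^\top (Q_o^+[t])^{-1}(y-\mu_o[t])\le 1\}$. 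Rewriting \eqref{eq:Q_matrix_plus} for the unit vector $l_0$ as $Q_o^+[t]=(1+\kappa^{-1})Q_o[t]+(1+\kappa)r_o^2 I_n$ with $\kappa=\sqrt{l_0^\top Q_o[t]l_0}/r_o$, the squared support-function inequality reduces to $\kappa^{-1}a+\kappa b\ge 2\sqrt{ab}$ with $a=l^\top Q_o[t]l$ and $b=r_o^2\|l\|^2$, which is immediate from AM--GM and is tight along $l_0$. Thus if $x[t]$ satisfies \eqref{eq:reverse_cvx_coll} it lies outside $\mathcal{E}_o\oplus\mathrm{Ball}(\mathbf{0},r_o)$, the per-term bound holds, and resumming through the union bound recovers \eqref{eq:opt_collide_prob}.

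I expect the main obstacle to be this external-ellipsoid inclusion: the support-function/AM--GM argument together with checking that \eqref{eq:Q_matrix_plus} is exactly the Kurzhanski external approximation made tight along $l_0$, and the algebraic bookkeeping that ties the Gaussian normalization $1/\sqrt{|2\pi\Sigma_o[t]|}$ and the ball volume to the logarithm in \eqref{eq:Q_matrix}. By contrast, the union bound and the density-times-volume estimate are routine. I would also flag the centering detail already noted, namely that \eqref{eq:reverse_cvx_coll} is written with the ellipsoid centered at $\mu_o[t]$, corresponding to the standard symmetric-cover setting $c_o=\mathbf{0}$.
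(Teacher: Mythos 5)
Your proposal is correct and takes essentially the same route as the paper: Boole's inequality to decouple \eqref{eq:opt_collide_prob} into per-obstacle, per-time-step bounds of $\alpha/(TN_O)$, then a density-times-volume bound yielding the confidence ellipsoid $Q_o[t]$, and finally an ellipsoidal outer approximation of the Minkowski sum with $\mathrm{Ball}(c_o,r_o)$ giving $Q_o^+[t]$. The only difference is one of completeness, not of method: the paper's proof delegates the last two steps to \cite{vinod2018stochastic} and ``ellipsoidal computational geometry,'' whereas you carry them out explicitly --- including the support-function/AM--GM verification that $Q_o^+[t]$ is a valid external approximation tight along $l_0$, and the correct flagging of the $\beta_o[t]>0$ (small-$\alpha$) and $c_o$-centering details that the paper leaves implicit.
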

\begin{proof}
    Using Boole's inequality, we have
    \begin{align}
        &\ProbObs\{\cup_{o=1}^{N_O}\cup_{t=1}^{T} \{x_o[t] \in x[t] \oplus \mathcal{B}_{o}\}\} \nonumber\\
        &\quad\leq \sum\nolimits_{t=1}^T \sum\nolimits_{o=1}^{N_O}\ProbObs\{x_o[t] \in x[t] \oplus \mathcal{B}_{o}\}
    \end{align}
    for any robot trajectory ${\{x[t]\}}_{t=0}^T$. Clearly, a sufficient condition to guarantee \eqref{eq:opt_collide_prob} is that the robot trajectory must satisfy 
    \begin{align}
     \ProbObs\{x_o[t] \in x[t] \oplus \mathcal{B}_{o}\} &\leq \frac{\alpha}{T N_O},\label{eq:opt_collide_prob_icc}
    \end{align}
    for every $o\in\mathbb{N}_{[1,N_O]}$ and $t\in\mathbb{N}_{[1,T]}$. We conclude the proof by noting that \eqref{eq:opt_collide_prob_icc} is satisfied whenever $x[t]$ satisfies \eqref{eq:reverse_cvx_coll} for every $o\in\mathbb{N}_{[1,N_O]}$ and $t\in\mathbb{N}_{[1,T]}$.
    Using similar arguments as those used in~\cite[Section 3.A]{vinod2018stochastic} and ellipsoidal computational geometry, the avoidance of the ellipsoid \eqref{eq:reverse_cvx_coll} is sufficient to guarantee \eqref{eq:opt_collide_prob} under Assumption~\ref{assum:ball}.
\end{proof}
Using Proposition~\ref{prop:1}, we obtain the following tightened reformulation of \eqref{prob:orig_opt},
%
    \begin{align}
    \begin{array}{rl}
        \min &\quad \eqref{eq:opt_cost_func} \\
        %
        \mathrm{s.t.} &\quad
        \eqref{eq:opt_prob_basic_constraints}, \eqref{eq:opt_dynamics}, \\
        %
        \substack{\forall t \in \mathbb{N}_{[1,T]}, \\ \forall o\in\mathbb{N}_{[1,N_O]}}
        &\quad \eqref{eq:reverse_cvx_coll},
    \end{array}\label{prob:orig_opt_ell}%
        \end{align}
%
with decision variables $x[1],\ldots,x[T]$ and $u[0],\ldots,u[T-1]$. 
Specifically, every feasible solution of \eqref{prob:orig_opt_ell} is feasible for \eqref{prob:orig_opt}. 
%

We can also express \eqref{prob:orig_opt_ell} in the general form of a nonlinear program
\begin{align}
    \begin{array}{rl}
        \min & \quad c(y) \\
        \mathrm{s.t.} & \quad h(y) = \mathbf{0}_{j\times1},  \\
        & \quad g(y) \leq \mathbf{0}_{l\times 1},
    \end{array}\label{prob:gen_nonlinear}%
\end{align}
where $y\in \mathbb{R}^{i}$, $c:\mathbb{R}^{i}$$\rightarrow$$\mathbb{R}$, $h:\mathbb{R}^{i}$$\rightarrow$$\mathbb{R}^{j}$, and $g:\mathbb{R}^{i}$$\rightarrow$$\mathbb{R}^{l}$, and $c$, $h$, and $g$ are all  twice continuously differentiable.

A variety of methods exist to solve the nonlinear program~\eqref{prob:gen_nonlinear} to local optimality.
In this paper, we focus our attention on sequential quadratic programming (SQP)-based approaches \cite{boggs1995sequential}, wherein a sequence of quadratic programming (QP) subproblems are iteratively solved until convergence to a local optima is obtained.
At each iterate, these subproblems are constructed using a quadratic approximation of the objective function subject to the constraints linearized about the previous solution iterate.
Let $y^{k}$ denote the optimal solution of the QP at the $k^{th}$ iteration.
Then, $y^{k+1}$ is obtained by solving the following QP,
\begin{align}
\begin{array}{rl}
    \min & \quad \nabla c(y^{k})^{\top}(y^{k+1}-y^{k}) \\
    & \qquad + (y^{k+1} - y^{k})^{\top} M_{k} (y^{k+1} - y^{k}) \\
    \mathrm{s.t.} & \quad h(y^{k}) + \nabla h(y^{k})^{\top} (y^{k+1} - y^{k}) = \mathbf{0}_{j\times1}, \\
    & \quad g(y^{k}) + \nabla g({y^{k}})^{\top} (y^{k+1} - y^{k}) \leq \mathbf{0}_{l\times 1},
\end{array}\label{prob:gen_sqp_qp}
\end{align}
where $M^{k}$$\in$$\mathbb{R}^{i\times i}$ is a positive definite approximation to the Hessian of the Lagrangian $\nabla^{2}\mathcal{L}(y^{k},\lambda,\gamma)$, where
\begin{align*}
    \mathcal{L}(y^{k},\lambda,\gamma) = c(y^{k}) - \lambda^{\top} h(y^{k}) - \gamma^{\top} g(y^{k}),
\end{align*}
in which $\lambda$$\in$$\mathbb{R}^{j}$ and $\gamma$$\in$$\mathbb{R}^{l}$ are the Lagrange multipliers.
Note that since $M^{k}$ is positive definite and all constraints are affine in $y^{k+1}$,~\eqref{prob:gen_sqp_qp} is a convex QP and can be solved efficiently using off-the-shelf solvers such as \texttt{OSQP}~\cite{stellato2020osqp}.

Define the concatenated vectors $\mathbf{x}$$=$$[x[1]^{\top}, \ldots, x[T]^{\top}]^{\top}$ and $\mathbf{u}$$=$$[u[0]^{\top}, \ldots, u[T-1]^{\top}]^{\top}$. 
Then, \algName{} solves the following
QP at every iteration of the SQP procedure:
%
\begin{subequations}
\begin{align}
    \min & \quad \nabla J(\mathbf{x}^{k},\mathbf{u}^{k})^{\top} \left[ \begin{array}{cc}
         \mathbf{x}^{k+1} - \mathbf{x}^{k}  \\
         \mathbf{u}^{k+1} - \mathbf{u}^{k}
    \end{array}\right] \nonumber \\
    & \qquad + \left[ \begin{array}{cc}
         \mathbf{x}^{k+1} - \mathbf{x}^{k}  \\
         \mathbf{u}^{k+1} - \mathbf{u}^{k}
    \end{array}\right]^{\top} M^{k} \left[ \begin{array}{cc}
         \mathbf{x}^{k+1} - \mathbf{x}^{k}  \\
         \mathbf{u}^{k+1} - \mathbf{u}^{k}
    \end{array}\right] \label{eq:opt_cost_func_upd} \\
    \mathrm{s.t.} & { } \nonumber \\
    \substack{\forall t \in \mathbb{N}_{[1,T]}, \\ \forall o\in\mathbb{N}_{[1,N_O]}} & \quad  (C^{k}_{obs}[t])^{\top} x^{k+1}[t] \leq d^{k}_{obs}[t], \label{eq:opt_dc_lin} \\
    \forall t\in \mathbb{N}_{[0,T-1]} & \quad x^{k+1}[t+1] = \\ 
    & \qquad (C^{k}_{dyn}[t])^{\top} \left[ \begin{array}{cc}
         x^{k+1}[t]  \\
         u^{k+1}[t]
    \end{array} \right] + d^{k}_{dyn}, \label{eq:opt_agent_dyn} \\
    \substack{\forall t \in \mathbb{N}_{[0,T-1]}, \\ \forall o\in\mathbb{N}_{[1,N_O]}} & \quad \mu_{o}[t+1] = A_{o} \mu_o[t] + B_{o} \mu_{w_o}, \label{eq:opt_gauss_mean_t} \\
    \substack{\forall t \in \mathbb{N}_{[0,T-1]}, \\ \forall o\in\mathbb{N}_{[1,N_O]}} & \quad \Sigma_{o}[t+1] = A_{o} \Sigma_o[t] A_{o}^\top + B_{o} \Sigma_{w} B_{o}^T, \label{eq:opt_gauss_sig_t} \\
    \forall t\in \mathbb{N}_{[0,T]} & \quad u^{k+1}[t] \in \mathcal{U}, x^{k+1}[t] \in \SafeSet, \label{eq:opt_safe_upd}
\end{align}\label{prob:opt_diff_convex}%
\end{subequations}%
where 
\begin{align*}
    C^{k}_{obs}[t] & = 2(Q_{o}^{+})^{-1} (x^{k}[t] - \mu_{o}[t]), \\
    d^{k}_{obs}[t] & = 1 - (x^{k}[t] - \mu_{o}[t])^{\top} (Q_{o}^{+})^{-1} (x^{k}[t] - \mu_{o}[t]) \\
    & \qquad + 2 (Q_{o}^{+})^{-1} (x^{k}[t] - \mu_{o}[t]) x^{k}[t], \\
    C^{k}_{dyn}[t] & = \nabla f(x^{k}[t],u^{k}[t]), \\
    d^{k}_{dyn}[t] & = f(x^{k}[t],u^{k}[t]) - (\nabla f(x^{k}[t],u^{k}[t]))^{\top} \left[ \begin{array}{cc}
         x^{k}[t]  \\
         u^{k}[t]
    \end{array} \right],
\end{align*}
in which~\eqref{eq:opt_dc_lin} and~\eqref{eq:opt_agent_dyn} are the linearizations of the ellipsoidal keep-out constraints \eqref{eq:reverse_cvx_coll} and the robot dynamics~\eqref{eq:agent_dynamics}, respectively, about the previous solution iterate of the SQP procedure. In what follows, we denote the solution of the SQP procedure by $\xsqp$. By construction, $\xsqp$ is a dynamically-feasible trajectory for the robot that also satisfies the probabilistic safety constraint \eqref{eq:opt_collide_prob}.

\subsection{Duality-based constrained sensor scheduling}\label{sec:dual_based_sensing}

We now turn our attention to the problem of scheduling sensors under constraints. 
Intuitively, the robot can dramatically improve its overall performance in future time steps by reducing the uncertainty in the states of the obstacles that adversely affect its current motion. 
By reducing the uncertainty of these obstacles, the robot is free from adopting  conservative trajectories for the sake of safety.
However, since the keep-out constraints \eqref{eq:reverse_cvx_coll} are enforced via linearized approximations, the impact of the measurement of an obstacle state on $\xsqp$ is unclear, as shown in Figure~\ref{fig:dc_bef_aft_obs}. We propose a refinement of $\xsqp$ using the \emph{project-and-linearize} approach~\cite{vinod2018stochastic}. We show that the influence of a measurement of the obstacle state on the refinement is easily established via sensitivity analysis.


\begin{figure}
    \centering\scalebox{0.77}{
    \includegraphics{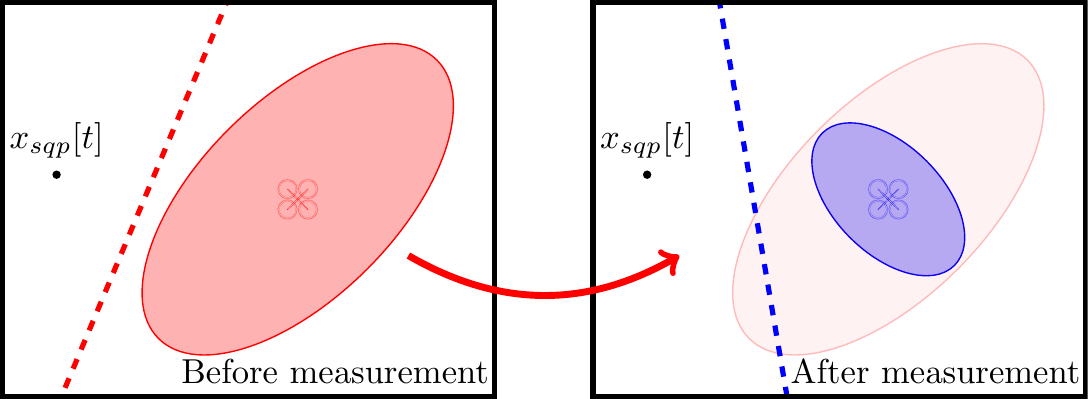}}
    \caption{Effect of measurement on the linearized obstacle-avoidance constraints.}
    \label{fig:dc_bef_aft_obs}
\end{figure}

We refine the solution $\xsqp$ by solving the following nonlinear program, 
\begin{subequations}
\begin{align}
    \min & \quad J(x[1],\ldots,x[T],u[0],\ldots,u[T-1]) \label{eq:opt_cost_func_proj} \\
    \mathrm{s.t.} & \quad C_\mathrm{proj}\left[x[0]^{\top},\ldots, x[T]^{\top} \right]^{\top} \leq d_\mathrm{proj}, \label{eq:opt_dc_lin_proj} \\
    &\quad \eqref{eq:opt_prob_basic_constraints}, \eqref{eq:opt_dynamics}, \eqref{eq:opt_gauss_mean_t}, \eqref{eq:opt_gauss_sig_t}. \nonumber
\end{align}\label{prob:opt_proj}%
\end{subequations}
Here, the matrix $C_\mathrm{proj}\in\mathbb{R}^{(TN_O)\times (Tn)}$ and the vector $d_\mathrm{proj}\in\mathbb{R}^{(TN_O)}$ define a convex feasible set defined by the supporting hyperplanes at the projection of ${\{x_\mathrm{sqp}[t]\}}_{t=0}^T$ onto the keep-out ellipsoids \eqref{eq:reverse_cvx_coll}. The optimization problem \eqref{prob:opt_proj} is guaranteed to find a solution, which we will denote as $\xpr$, no worse than $\xsqp$, since $\xsqp$ is feasible for \eqref{prob:opt_proj}. 

While the safety constraints in \eqref{prob:opt_diff_convex} are enforced via a  linearization of the non-convex quadratic constraint \eqref{eq:reverse_cvx_coll} at each iteration of the SQP procedure, the safety constraints in \eqref{prob:opt_proj} are enforced via the supporting hyperplanes of the keep-out ellipsoids \eqref{eq:reverse_cvx_coll}. Consequently, the effect of shrinking the keep-out sets directly influences $d_\mathrm{proj}$ in contrast to the linearized constraints. 

The effect of obstacle state measurement can also be viewed as a relaxation of the nonlinear program \eqref{prob:opt_proj}, wherein $d_\mathrm{proj}$ is replaced by $d_\mathrm{proj} + \delta$ for some slack $\delta\in\mathbb{R}^{(TN_O)}$, and $\delta \succeq 0$. 
%
After obtaining a new measurement, the maximum eigenvalues of the covariance matrix of the obstacle position shrinks due to the subtraction of the Kalman gain term in~\eqref{eq:sigmatp1}. 
This reduction in the covariance in turn shrinks the keep-out ellipsoids \eqref{eq:reverse_cvx_coll} of the obstacle, relaxing the obstacle avoidance constraints in the motion-planning problem of the robot.


We now utilize sensitivity analysis, along with a mild assumption, to design a constrained sensor scheduling strategy.
\begin{lem}[\textsc{Sensitivity analysis~\cite[\textrm{Sec. 5.6}]{boyd2004convex}}]\label{lem:duality}
Consider the $\delta$-perturbed, convex optimization problem,
\begin{subequations}
\begin{align}
    \min &\quad  f_0(x) \\
    \mathrm{s.t.} &\quad x\in \mathcal{X},\\
                         &\quad f_i(x) \leq \delta_i\ \forall i\in\mathbb{N}_{[1,N]},
\end{align}\label{prob:perturb}%
\end{subequations}
with decision variable $x\in\mathbb{R}^n$, convex $f_i \ \forall i\in\mathbb{N}_{[0, N]}$, convex set $\mathcal{X}$, and perturbation vector $\delta=[\delta_1\ \ldots\ \delta_N]\in \mathbb{R}^N$. Let $p^\ast(\delta):\mathbb{R}^N \to \mathbb{R}$ denote the optimal value of \eqref{prob:perturb}. Then, whenever strong duality holds for \eqref{prob:perturb} with $\delta=0$, we have 
\begin{align}
    p^\ast(\delta) \geq p^\ast(0) - {\left({\lambda^\ast}\right)}^\top \delta,
\end{align}
for every $\delta\in\mathbb{R}^N$, where $\lambda^\ast\in \mathbb{R}^N$ is the optimal dual variable vector  corresponding to \eqref{prob:perturb} with $\delta=0$. In addition to strong duality, if $p^\ast$ is differentiable at $\delta=0$, then relaxing the $i^\mathrm{th}$ constraint $f_i(x)\leq 0$ to $f_i(x)\leq \delta_i$, for some $\delta_i > 0$, approximately decreases the optimal value by $\lambda_i^\ast\delta_i$.
\end{lem}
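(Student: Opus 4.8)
The plan is to prove the global inequality first by a direct weak-duality argument, and then to extract the local sensitivity (gradient) statement as a consequence of that inequality under the extra differentiability hypothesis. The treatment of the abstract constraint $x\in\mathcal{X}$ is the one structural subtlety to keep in mind: since the perturbation $\delta$ acts only on the functional constraints $f_i(x)\leq\delta_i$ and not on $\mathcal{X}$, I would fold $\mathcal{X}$ into the implicit domain of the Lagrangian. Thus I form the Lagrangian of the nominal ($\delta=0$) problem as $L(x,\lambda)=f_0(x)+\sum_{i=1}^N\lambda_i f_i(x)$ and the dual function $d(\lambda)=\inf_{x\in\mathcal{X}}L(x,\lambda)$. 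By the strong-duality hypothesis at $\delta=0$, there exists an optimal dual vector with $\lambda_i^\ast\geq0$ for all $i\in\mathbb{N}_{[1,N]}$ satisfying $p^\ast(0)=d(\lambda^\ast)$.

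For the global bound I would take an arbitrary point $x$ that is feasible for the $\delta$-perturbed problem, meaning $x\in\mathcal{X}$ and $f_i(x)\leq\delta_i$ for every $i$. The definition of $d(\lambda^\ast)$ as an infimum over $\mathcal{X}$ immediately gives $p^\ast(0)=d(\lambda^\ast)\leq f_0(x)+\sum_{i=1}^N\lambda_i^\ast f_i(x)$. Since each $\lambda_i^\ast\geq0$ and each $f_i(x)\leq\delta_i$, the sum is bounded above by $(\lambda^\ast)^\top\delta$, so that $f_0(x)\geq p^\ast(0)-(\lambda^\ast)^\top\delta$. Because this holds for every feasible $x$, taking the infimum of the left-hand side over the feasible set of the $\delta$-perturbed problem yields $p^\ast(\delta)\geq p^\ast(0)-(\lambda^\ast)^\top\delta$, which is exactly the claimed inequality.

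For the differentiability statement I would reinterpret the global bound geometrically. Define the affine map $g(\delta)=p^\ast(0)-(\lambda^\ast)^\top\delta$. The inequality just proved says $p^\ast(\delta)\geq g(\delta)$ everywhere, while $p^\ast(0)=g(0)$, so $g$ is a supporting affine minorant of $p^\ast$ touching it at the origin. Equivalently, $\delta=0$ is a global minimizer of the nonnegative function $p^\ast-g$. If $p^\ast$ is differentiable at $\delta=0$, then so is $p^\ast-g$, and the first-order stationarity condition at an interior minimizer forces $\nabla p^\ast(0)=\nabla g(0)=-\lambda^\ast$, hence $\partial p^\ast/\partial\delta_i(0)=-\lambda_i^\ast$. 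A first-order Taylor expansion then shows that relaxing the $i$th constraint to $f_i(x)\leq\delta_i$ with $\delta_i>0$ changes the optimum by approximately $-\lambda_i^\ast\delta_i$, i.e. decreases it by $\lambda_i^\ast\delta_i$, as stated.

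The first (global) part is essentially immediate once weak duality, feasibility, and $\lambda_i^\ast\geq0$ are combined; I expect the only genuinely delicate step to be the gradient extraction. The point worth care is that the global supporting-minorant property, rather than merely a one-sided directional estimate, is what lets differentiability pin the gradient down to exactly $-\lambda^\ast$; without differentiability one can only conclude that $-\lambda^\ast$ lies in the subdifferential of $p^\ast$ at the origin.
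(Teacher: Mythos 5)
Your proof is correct and is precisely the standard argument from Boyd and Vandenberghe, Section~5.6, which the paper itself cites in lieu of giving a proof: weak duality plus dual feasibility ($\lambda_i^\ast \geq 0$) yields the global affine lower bound, and the supporting-minorant stationarity argument for the gradient is equivalent to the textbook's two-sided directional-derivative computation showing $\nabla p^\ast(0) = -\lambda^\ast$. Your closing remark that without differentiability one only obtains $-\lambda^\ast$ in the subdifferential of $p^\ast$ at the origin is also accurate and consistent with the cited treatment.
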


\begin{assumption}\label{assum:Slaters}
Each QP subproblem of \eqref{prob:opt_proj} satisfies Slater's condition, i.e., it has a strictly feasible solution.
\end{assumption}
The validity of Assumption~\ref{assum:Slaters} can be easily checked via the positivity of the Chebyshev radius of the polytope corresponding to the feasible set of \eqref{prob:opt_proj}. Recall that the computation of the Chebyshev radius is a linear program~\cite[Ch. 8]{boyd2004convex}. In all of our simulation and hardware experiments, we found that Assumption~\ref{assum:Slaters} was always satisfied.
Under Assumption~\ref{assum:Slaters}, strong duality always holds for each QP subproblem of the SQP procedure for the refined nonlinear program in \eqref{prob:opt_proj}. Furthermore, the optimal solution map $p^\ast$ for \eqref{prob:opt_proj} is always differentiable via the KKT conditions~\cite[Ch. 5]{boyd2004convex}. 
%
%
%

\emph{Constrained sensor scheduling strategy:} 
By Lemma~\ref{lem:duality} and Assumption~\ref{assum:Slaters}, the obstacle measurement that corresponds to the greatest impact on the improvement of the performance of the robot should correlate to the obstacle with the largest values for its associated dual variables. Consequently, for each obstacle $o\in\mathbb{N}_{[1,N_O]}$, we first compute a scalar $\Lambda_o=\sum_{t=1}^T \gamma^t \lambda_{o,t}^\ast$ using the optimal dual variables. Here, $\gamma\in(0, 1]$ serves as a discount factor that allows the user to balance the present uncertainty in the state of an obstacle with its increasingly uncertain future state position. We subsequently measure the underlying states of the obstacles corresponding to the top $K$ values of $\Lambda_o$. Based on the measurements, we recompute their associated (smaller) keep-out ellipsoids \eqref{eq:reverse_cvx_coll}. For the obstacles that were not measured, we continue using the original keep-out ellipsoids for motion planning.

\subsection{Numerical implementation}

\begin{algorithm}[t]
    \caption{\algName{} motion planner}\label{algo:safely}
    \DontPrintSemicolon
    
    Input: robot dynamics~\eqref{eq:agent_dynamics}, initial state $x[0]$, target state $x_g$, tolerance $\epsilon$, feasible control input $\mathcal{U}$, and safe set $\mathcal{S}$; obstacle shape $\mathcal{B}_o$, dynamics~\eqref{eq:obstacle_dynamics}, and initial belief $x_{o}[0]\sim \mathcal{N}(\mu_{o}[0],\Sigma_{o}[0])$ for each $o\in \mathbb{N}_{[1,N_{o}]}$; maximum probability $\alpha$, planning horizon $T$, discount factor $\gamma$, observations $K$ 
    
    \While{$|x[t]-x_g| \geq \epsilon$}{
        
        \texttt{Propagate, construct ellipsoids:} \\
            \For{$o=1\ldots N_{o}$}{
                \For{t=1\ldots T}{
                    Compute $\mu_{o}[t]$,$\Sigma_{o}[t]$ by \eqref{eq:gauss_mean_t},\eqref{eq:gauss_sig_t} \\
                    Compute $Q_{o}[t]$ by \eqref{eq:Q_matrix} \\
                    Compute $Q_{o}^{+}[t]$ by \eqref{eq:Q_matrix_plus} \\
                    Store $\mathcal{E}(\mu_{o}[t],Q_{o}^{+}[t])$
                }
            }
        \texttt{Solve SQP:} \\
        \While{$|J_{new}-J_{old}| \geq \hat{\epsilon}$}{
            $J_{new}$ = $\mathrm{min}$ \eqref{prob:opt_diff_convex}
        }
        $\xsqp=\mathrm{argmin}$\eqref{prob:opt_diff_convex} \\
        \texttt{Projection:} \\
            Compute $C_{proj},d_{proj}$ \\
            $\xpr = \mathrm{argmin}$ \eqref{prob:opt_proj} \\
            Compute $\Lambda_{o}[t]$ for $o\in \mathbb{N}_{[1,N_{o}]}$ \\
        \texttt{Observe:} \\
            \For{$o=1\ldots N_{o}$}{
                \If{$\Lambda_{o}\in K \text{ largest}$ }{
                    Draw observation $z_{o}[1]$ according to \eqref{eq:linear_sensor_model} \\
                    Update $x_{o}[1]$, $\Sigma_{o}[1]$ according to \eqref{eq:xtp1_minus}-\eqref{eq:sigmatp1}
                }
            }
        \texttt{Reset:} \\
        $x[0]=x_{\mathrm{pr}}[1]$ \\
        \For{$o=1\ldots N_{o}$}{
            $\mu_{o}[0]=\mu_{o}[1]$, $\Sigma_{o}[0]=\Sigma_{o}[1]$ 
        }
    }
\end{algorithm}

%

\begin{figure}[t]
    \centering
    \begin{subfigure}[t]{0.45\textwidth}
        \centering
        \scalebox{0.75}{
        \includegraphics{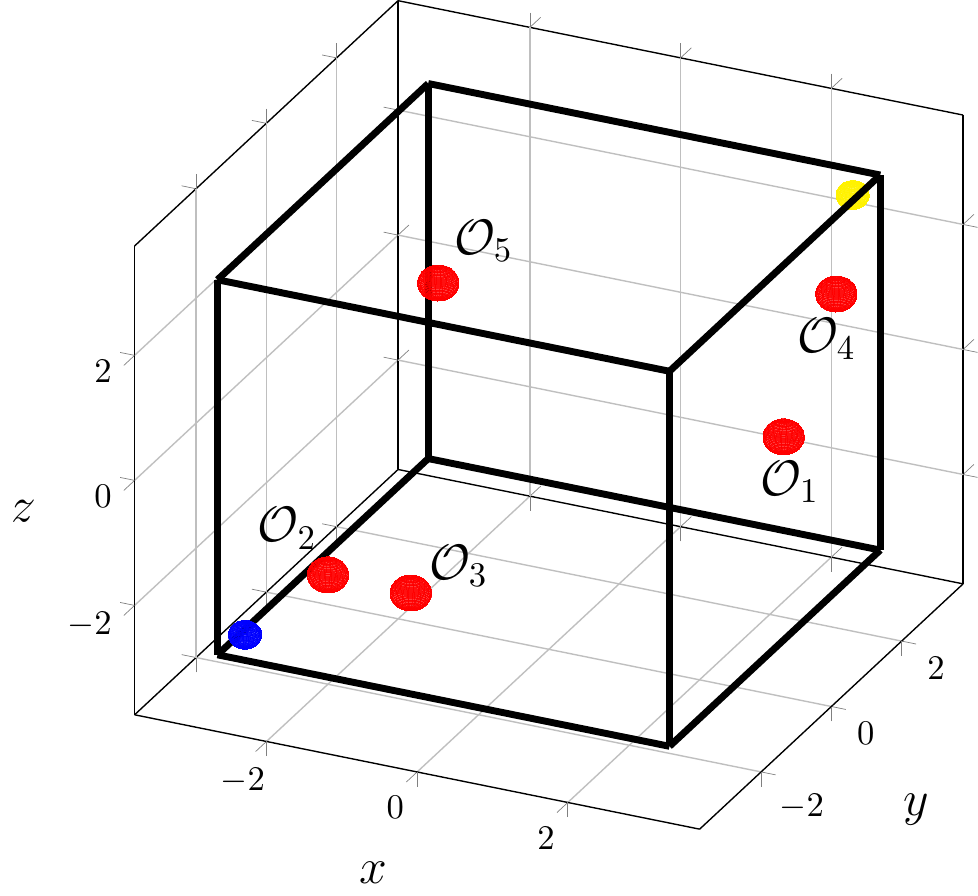}}
        \caption{Initial configuration of Example 1.}
        \label{fig:example_env}
    \end{subfigure}
    \begin{subfigure}[t]{0.45\textwidth}
    \centering
        \scalebox{1}{
        \includegraphics{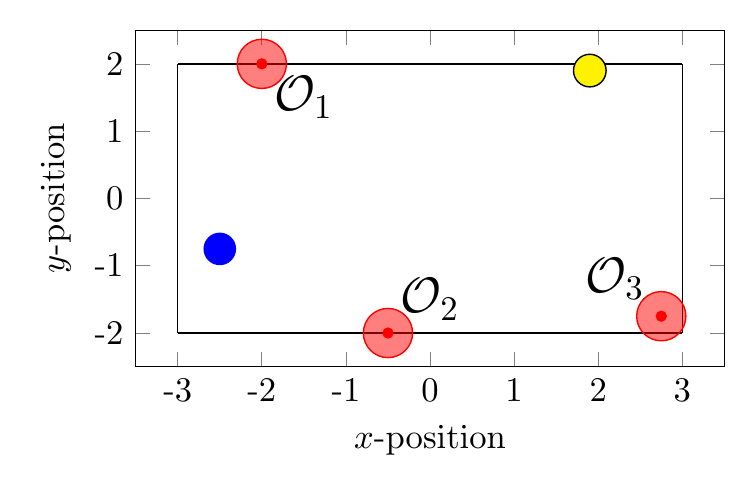}}
        \caption{Initial configuration of Example 2.}
        \label{fig:nonlin_example_env}
    \end{subfigure}
    \caption{Initial configurations of the sample environments.}
    \label{fig:simulation_init_configs}
\end{figure}

\begin{figure*}[!ht]
    \centering
    \begin{subfigure}[t]{0.32\textwidth}
        \centering        
        \scalebox{2.25}{\includegraphics{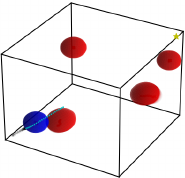}}
        \caption{$t=10$}
        \label{fig:sum_10}
    \end{subfigure}
    \begin{subfigure}[t]{0.32\textwidth}
    \centering
        \scalebox{2.25}{\includegraphics{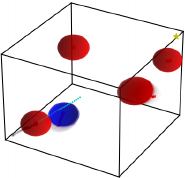}}
        \caption{$t=20$}
        \label{fig:sum_20}
    \end{subfigure}
    \begin{subfigure}[t]{0.32\textwidth}
        \centering
        \scalebox{2.25}{
        \includegraphics{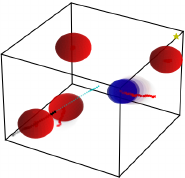}}
        \caption{$t=30$}
        \label{fig:sum_30}
    \end{subfigure}
    \begin{subfigure}[t]{0.32\textwidth}
        \centering
        \scalebox{2.25}{
        \includegraphics{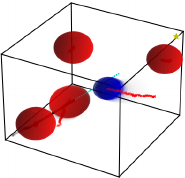}}
        \caption{$t=40$}
        \label{fig:sum_40}
    \end{subfigure}
    \begin{subfigure}[t]{0.32\textwidth}
        \centering
        \scalebox{2.25}{
        \includegraphics{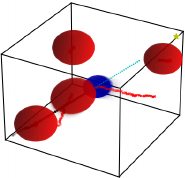}}
        \caption{$t=50$}
        \label{fig:sum_50}
    \end{subfigure}
    \begin{subfigure}[t]{0.32\textwidth}
        \centering
        \scalebox{2.25}{
        \includegraphics{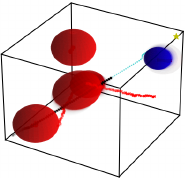}}
        \caption{$t=60$}
        \label{fig:sum_60}
    \end{subfigure}
    \caption{Trajectory obtained using the \algName{} motion planner at various time steps over the simulation. The black marks indicate the trajectory of the robot up to the current time step, while the light blue marks indicate the future motion plan. The red marks indicate the realization of the random obstacle trajectories up to the current time step. The ellipsoids indicate the keep-out regions for guaranteeing probabilistic safety (Proposition~\ref{prop:1}). The obstacle ellipsoid selected for observation at a particular time step is colored blue, and otherwise is colored red. }
    \label{fig:plot_traj}
\end{figure*}

We summarize the \algName{} motion planner used to solve \eqref{prob:orig_opt} in Algorithm~\ref{algo:safely}. 
Note that since we assume the obstacle dynamics are linear Gaussian systems, we have closed-form expressions for all terms in the propagation and the construction step~\cite{vinod2018stochastic}. 
Furthermore, for the ellipsoidal outer-approximations, we can compute the projections used to construct $C_{\mathrm{proj}}$ and $d_{\mathrm{proj}}$ by solving
\begin{align*}
    \min_{x\in \mathbb{R}^{n}} & \quad ||x - x_{sqp}[t]||^{2} \\
    \mathrm{s.t.} & \quad (x - \mu_{o}[t])^{\top} (Q_{o}^{+}[t])^{-1} (x - \mu_{o}[t]) \leq 1,
\end{align*}
for each $o\in \mathbb{N}_{[1,N_{o}]}$ and each $t \in \mathbb{N}_{[1,T]}$, which is a convex second-order cone program that can be solved efficiently using off-the-shelf solvers like \texttt{ECOS}~\cite{domahidi2013ecos}.

\section{EXPERIMENTAL VALIDATION}\label{sec:num_ex}

We now consider several numerical experiments demonstrating the \texttt{Safely} motion planner.

\subsection{Software experiment using linear robot dynamics}

We consider the environment shown in Figure~\ref{fig:example_env}, where the safe set $\SafeSet$ of the robot is the $6\text{m}\times6\text{m}\times6\text{m}$ black cube centered at the origin. The robot begins in the lower-left corner in state $x[0]=[-2.75,-2.75,-2.75]^{\top}$ and must travel to the target state located at $x_{g}=[2.75,2.75,2.75]^{\top}$. We use the quadratic cost function
\begin{equation}\label{eq:quad_cost}
    J(x,u) = \sum\nolimits_{t=0}^{T}||x[t]-x_{g}||^{2}
\end{equation}
to model this objective, where $T$ is the planning horizon. We consider a fixed planning horizon of $T=25$ time steps.

While travelling towards its target state, the robot must avoid colliding with any of a set of five stochastic, dynamic obstacles whose initial positions are provided in the Appendix. We assume that the robot begins with perfect knowledge of these values; i.e., $\Sigma_{o}[0]=\mathbf{0}_{3\times3}$, and that the radius of each ball $\mathcal{B}_{o}$ is given by $r_{o}=0.25$m for each $o \in \oRange$. This initial obstacle configuration yields the five red ellipsoids shown in Figure~\ref{fig:example_env}.

The robot follows the linear dynamics
\begin{align*}
    x[t+1] = A x[t] + B u[t],
\end{align*}
where $x[t] \in \mathbb{R}^{6}$, $u[t] \in \mathbb{R}^{3}$, and
%
\begin{align*}
    A = \begin{bmatrix}
        \mathbf{I}_{3} & dt \cdot \mathbf{I}_{3} \\
        \mathbf{0}_{3\times3} & \mathbf{I}_{3}
    \end{bmatrix}, 
    B = \begin{bmatrix}
        \nicefrac{dt^{2}}{2} \cdot \mathbf{I}_{3} \\
        dt \cdot \mathbf{I}_{3}
    \end{bmatrix},
\end{align*}
in which $dt$ is the time interval between discrete steps, here set to $dt=0.25$s. Note that $x[t]$ is decomposed as $x[t]$$=$$[p[t]^{\top},v[t]^{\top}]^{\top}$, where $p[t],v[t]$$\in$$ \mathbb{R}^{3}$ are the position and the velocity components of the state vector, respectively.
The parameters for the sensor model~\ref{eq:linear_sensor_model} are given by
\begin{align}\label{eq:sim_sens_model}
    H = \mathbf{I}_{3}, \quad \mu_{\nu} = \left[ 0, 0, 0 \right]^{\top}, \quad \Sigma_{\nu} = 0.05 \mathbf{I}_{3}.
\end{align}
Furthermore, each obstacle $o \in \oRange$ follows the linear dynamics~\eqref{eq:obstacle_dynamics} with
%
\begin{align}\label{eq:obs_dynamics_example}
    A_{o} = \mathbf{I}_{3}, \quad
    B_{o} = dt \cdot \mathbf{I}_{3}.
\end{align}
The control input $u[t]=[u_{1}[t],u_{2}[t],u_{3}[t]]^{\top}$ of the robot is constrained to $\mathcal{U} = \{ u_{1}[t], u_{2}[t], u_{3}[t] \, : ||u[t]||_{\infty} \leq 0.5 \}$. Finally, the parameters of the Gaussian disturbances for each obstacle are provided in the Appendix.
%


We allow a single underlying obstacle state to be measured at each time step (setting $K=1$) and use a discount factor of $\gamma=1$. Furthermore, we set the value of $\alpha$ in~\eqref{eq:opt_collide_prob} to $0.01$. We use the \texttt{CASADI} Python interface  \cite{andersson2019casadi} with the \texttt{WORHP} SQP solver \cite{buskens2012esa} to solve \eqref{prob:opt_diff_convex}. Furthermore, since the state dynamics are linear and the cost function is quadratic, the refined problem in \eqref{prob:opt_proj} reduces to a QP, which we solve using \texttt{OSQP}. The simulation is run on a 1.8 GHz Intel Core i7-8550U CPU with 16 GB RAM. 
Figure~\ref{fig:plot_traj} displays the resulting trajectory obtained using the \algName{} motion planner. The robot required a total of $102$ time steps to reach the target state from its initial position, requiring an average computation time of $0.139$ seconds per iteration of \algName, with Figure~\ref{fig:times_stem_plot} showing the computation time required at each iteration. 

\begin{figure}
    \centering
    \scalebox{0.8}{
    \includegraphics[]{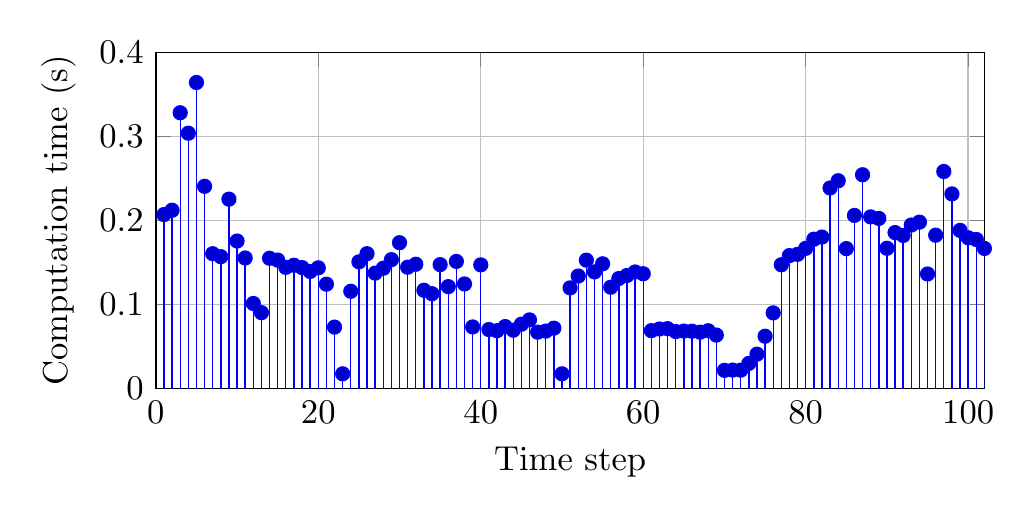}}
    \caption{Computation times for the $\algName{}$ motion planner at each time step of the simulation.}
    \label{fig:times_stem_plot}
\end{figure}
\begin{figure}
    \centering
    \scalebox{0.88}{
    \includegraphics[]{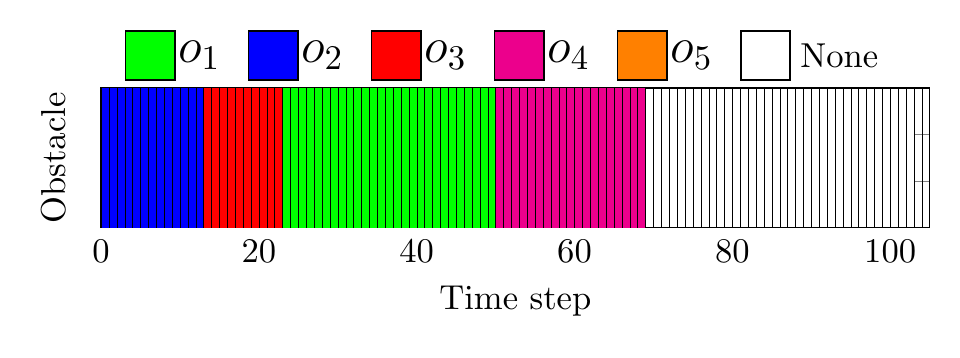}}
    \caption{Obstacle identified as the most relevant by the $\algName{}$ motion planner at each time step of the simulation.}
    \label{fig:obstacle_obs_made_plot}
\end{figure}

From Figures~\ref{fig:plot_traj} and~\ref{fig:obstacle_obs_made_plot}, the robot initially allocates its sensing resources towards observing $\mathcal{O}_2$ as it immediately impacts its nominal trajectory. Once the robot has navigated around $\mathcal{O}_{2}$, this obstacle no longer affects its trajectory and obstacle $\mathcal{O}_{3}$ is subsequently identified as the most relevant. However, this obstacle is only briefly the most relevant, as obstacle $\mathcal{O}_{1}$ quickly begins to significantly affect the final portion of the nominal trajectory. Once past $\mathcal{O}_{1}$, $\mathcal{O}_{4}$ is identified as the most relevant obstacle and is subsequently observed. Once the robot navigates around obstacle $\mathcal{O}_{4}$, no further sensor resources are required, as the robot is able to travel unimpeded to its target state. Note that obstacle $\mathcal{O}_{5}$ never affects the nominal trajectory of the robot and hence never has sensor resources allocated towards its observation.


\subsection{Software experiment using nonlinear robot dynamics}

\begin{figure}[t]
    \centering
    \scalebox{1.25}{
    \includegraphics{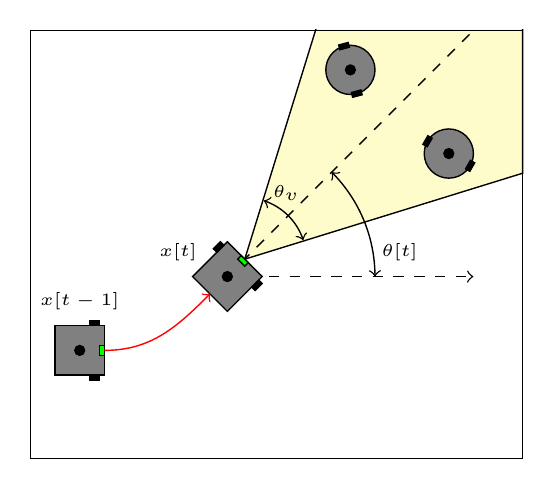}}
    \caption{Visualization of robot motion and sensing for nonlinear dynamics considered with a fixed, forward-facing camera.}
    \label{fig:fixed_camera_motion}
\end{figure}

\begin{figure*}[!ht]
    \centering
    \begin{subfigure}[t]{0.24\textwidth}
        \centering        
        \scalebox{1.4}{\includegraphics{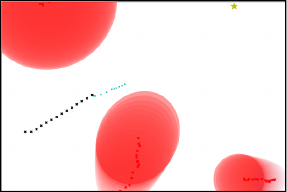}}
        \caption{$t=15$, $\algName$}
        \label{fig:safely_15}
    \end{subfigure}
    \begin{subfigure}[t]{0.24\textwidth}
    \centering
        \scalebox{1.4}{\includegraphics{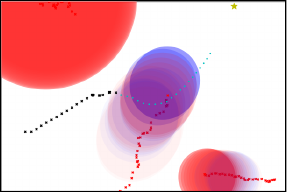}}
        \caption{$t=30$, $\algName$}
        \label{fig:safely_30}
    \end{subfigure}
    \begin{subfigure}[t]{0.24\textwidth}
        \centering
        \scalebox{1.4}{
        \includegraphics{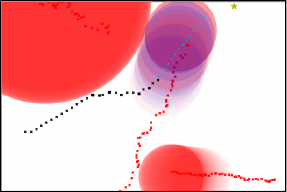}}
        \caption{$t=45$, $\algName$}
        \label{fig:safely_45}
    \end{subfigure}
    \begin{subfigure}[t]{0.24\textwidth}
        \centering
        \scalebox{1.4}{
        \includegraphics{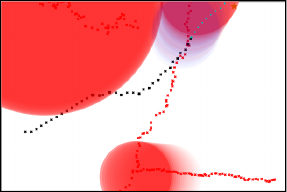}}
        \caption{$t=60$, $\algName$}
        \label{fig:safely_60}
    \end{subfigure}
    \begin{subfigure}[t]{0.24\textwidth}
        \centering
        \scalebox{1.4}{
        \includegraphics{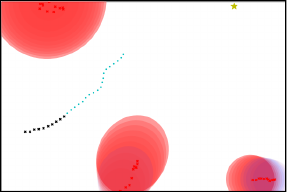}}
        \caption{$t=9$, BSP}
        \label{fig:bsp_9}
    \end{subfigure}
    \begin{subfigure}[t]{0.24\textwidth}
        \centering
        \scalebox{1.4}{
        \includegraphics{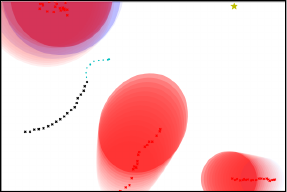}}
        \caption{$t=18$, BSP}
        \label{fig:bsp_18}
    \end{subfigure}
    \begin{subfigure}[t]{0.24\textwidth}
        \centering
        \scalebox{1.4}{
        \includegraphics{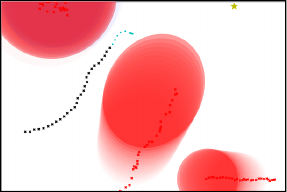}}
        \caption{$t=27$, BSP}
        \label{fig:bsp_27}
    \end{subfigure}
    \begin{subfigure}[t]{0.24\textwidth}
        \centering
        \scalebox{1.4}{
        \includegraphics{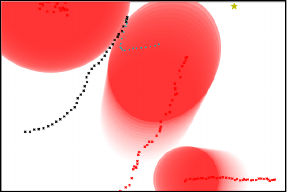}}
        \caption{$t=36$, BSP}
        \label{fig:bsp_36}
    \end{subfigure}
    \caption{Comparison of trajectories based on method considered --- \algName{} (a)--(d) vs. belief-space planning (BSP) (e)--(h). Since \algName{} only seeks to observe an obstacle when it is deemed relevant, it is able to reach the goal state since it does not attempt to observe irrelevant obstacles.}
    \label{fig:plot_traj_sim_2}
\end{figure*}

\begin{figure}
    \centering\scalebox{0.8}{
    \includegraphics{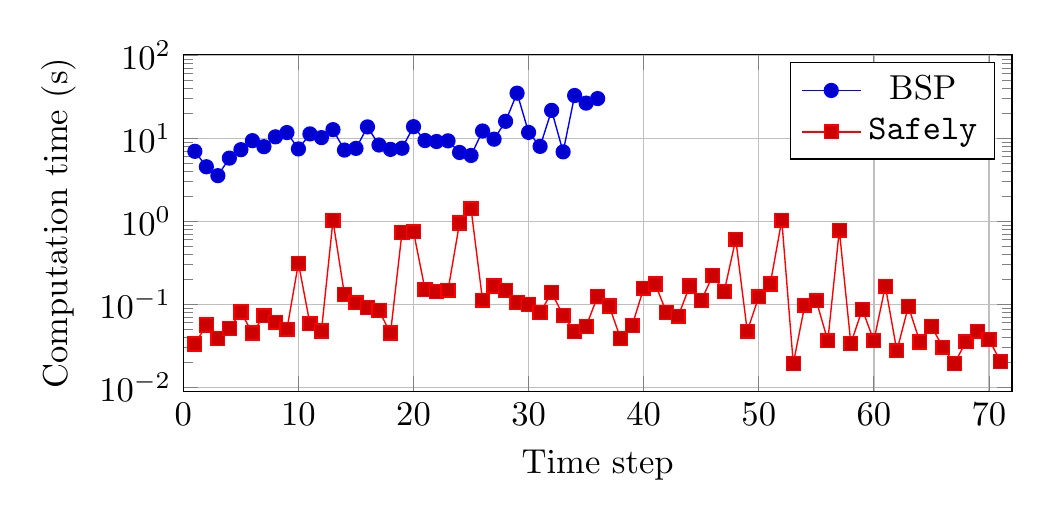}}
    \caption{Computation times for each iteration of \algName{} and the belief-space planning (BSP) method for Example 2.}
    \label{fig:comparison_solve_times}
\end{figure}

We now apply the \texttt{Safely} motion planner to the case that the robot follows the dynamics of Dubins' vehicle \cite{shapiro2014lectures}, where
\begin{subequations}
\begin{align}
    x[t+1] & = x[t] + dt \begin{bmatrix} v[t] \cos{\theta[t]} \\ v[t] \sin{\theta[t]} \end{bmatrix} \label{eq:dubins_1}, \\
    \theta[t+1] & = \theta[t] + dt \cdot \omega[t], \label{eq:dubins_2}
\end{align}\label{eq:dubins_dyn}%
\end{subequations}%
in which $v[t]\in [v_{min},v_{max}] \subset \mathbb{R}$ is the velocity input in m/s, $\theta[t]\in(-\pi,\pi]$ is the heading angle, and $\omega[t]\in[\omega_{min},\omega_{max}] \subset \mathbb{R}$ is the turning rate in rad/s. We consider these dynamics since they allow us to examine situations where the orientation of the robot affects what it can sense. Specifically, we consider the case that the robot uses a fixed, forward-facing camera with viewing angle $\theta_{v}$, as shown in Figure~\ref{fig:fixed_camera_motion}. 
When \texttt{Safely} identifies a relevant obstacle to sense, the robot must adjust its motion plan to locate this obstacle within its field of view (FOV). 
Since there is an explicit coupling between sensing and planning, we adjust the objective function in~\eqref{eq:quad_cost} to
\begin{equation}\label{eq:nonlin_cost_func}
    \sum\nolimits_{t=1}^{T} \beta \hat{\gamma}^{t} \langle \mu_{o_{r}}[t] - x[t], u_{\theta}[t] \rangle + ||x[t]-x_{g}||^{2},
\end{equation}
where $u_{\theta}[t] \triangleq [\cos(\theta[t]), \sin(\theta[t])]^{\top}$, $\beta \in \mathbb{R}_{+}$ is a weighting parameter, $\hat{\gamma} \in (0,1]$ is a discount factor, and $\mu_{o_{r}}[t]$ refers to the mean position at time $t$ of the current most relevant obstacle. The first term in~\eqref{eq:nonlin_cost_func} is the inner product between the heading angle of the robot and the direction vector from the robot to the most relevant obstacle's mean position, which incentivizes the robot to turn toward the most relevant obstacle. Note that if no obstacle is identified as relevant at a given time step, then the first term in~\eqref{eq:nonlin_cost_func} is omitted for that iteration of \algName. By discounting the first term, the resulting obstacle dual variables should continue to indicate the relevance of an obstacle in regard to the primary task of reaching the target state.

In this example, we consider the environment shown in Figure~\ref{fig:nonlin_example_env}, where the safe set $\SafeSet$ of the robot is the $6$m$\times$$4$m black square centered at the origin. The robot begins in the lower-left in state $x[0]=[-2.75,-1.00]^{\top}$ and must travel to the target state located at $x_{g}=[1.9,1.9]^{\top}$. We use $dt=0.5$s and $T=20$ time steps, and set $\beta=10$ and $\hat{\gamma}=0.8$. Finally, we set $v[t]\in [0.01,0.25]$m/s, $\omega[t]\in[-\frac{\pi}{3},\frac{\pi}{3}]$rad/s, and set $\theta_{v}=\frac{\pi}{3}$. Furthermore, the obstacles follow the linear dynamics
\begin{align}\label{eq:obs_dynamics_example}
    A_{o} = \mathbf{I}_{2}, \quad
    B_{o} = dt \cdot \mathbf{I}_{2}.
\end{align}
and their parameters are again provided in the Appendix. Finally, the parameters for the sensor model~\ref{eq:linear_sensor_model} are given by
\begin{align}\label{eq:sim_sens_model_ex_2}
    H = \mathbf{I}_{2}, \quad \mu_{\nu} = \left[ 0, 0 \right]^{\top}, \quad \Sigma_{\nu} = 0.05 \mathbf{I}_{2}.
\end{align}
All remaining parameters are the same as the previous example. Since the objective and the constraints are nonconvex, we use \texttt{WORHP} to solve both \eqref{prob:opt_diff_convex} and \eqref{prob:opt_proj}. 

As a comparison, we additionally implement a modified version of the belief-space planning (BSP) method proposed in~\cite{patil2014gaussian}, which considers obstacle uncertainty in the motion planning problem but not their relevance. Specifically, we modify the BSP framework by removing the hard end-point constraint and imposing obstacle avoidance by enforcing the ellipsoidal keep-out constraints corresponding to the $Q^{+}$ matrices constructed using the previous iterate's output from line 9 of Algorithm 1 in~\cite{patil2014gaussian}. We again use \texttt{WORHP} to solve the SQP subproblems in each iteration of the BSP algorithm.
Figure~\ref{fig:plot_traj_sim_2} displays the resulting trajectories over the course of each simulation. For \algName{}, the robot only allocates sensor resources towards sensing obstacle $\mathcal{O}_{2}$ since it travels directly through the shortest path between the initial state and the target state of the robot. After initially turning towards this obstacle at time step $15$, as shown in Figure~\ref{fig:safely_15}, the robot soon observes it, reducing the uncertainty in its state position. The robot must turn towards this obstacle several more times as it continues to plan to travel above obstacle $\mathcal{O}_{2}$. After sufficiently many observations of $\mathcal{O}_{2}$, the robot eventually plans to travel beneath it instead, as shown in Figure~\ref{fig:safely_45}. Figure~\ref{fig:safely_45} additionally shows that the robot continues to observe $\mathcal{O}_{2}$, as otherwise its uncertainty grows to cover the target state. By time step $60$, the robot has an unimpeded trajectory to the target state, as shown in Figure~\ref{fig:safely_60}.

Without the notion of obstacle relevance, the robot using the BSP method is unable to reach the target state without violating the collision-avoidance constraint. Initially, in Figure~\ref{fig:bsp_18}, the robot aligns itself with obstacle $\mathcal{O}_{1}$ as doing so provides the greatest uncertainty reduction while still adequately moving towards the target state. Doing so, however, quickly causes the robot to become ``pinched" between obstacles $\mathcal{O}_{1}$ and $\mathcal{O}_{2}$. By time step $36$, the robot has no safe nominal trajectory and subsequently violates the collision-avoidance constraint, as shown in Figure~\ref{fig:bsp_36}. Through this example, it is evident that obstacle relevance cannot simply be captured by the uncertainty in the position of that obstacle.

We also note that since the belief-space planning method requires solving multiples iterations of a more complex SQP at each time step, it requires significantly more computation time. Although our implementation of the BSP method is not optimized for computational efficiency, Figure~\ref{fig:comparison_solve_times} highlights the significantly lower computation times per iteration required by \algName{}. Whereas \algName{} required $0.182$ seconds per iteration, the BSP method required $11.81$ seconds per iteration.

\subsection{Hardware-based experiment using velocity inputs and turtlebots}

\begin{figure}
    \centering
    \scalebox{0.55}{\includegraphics{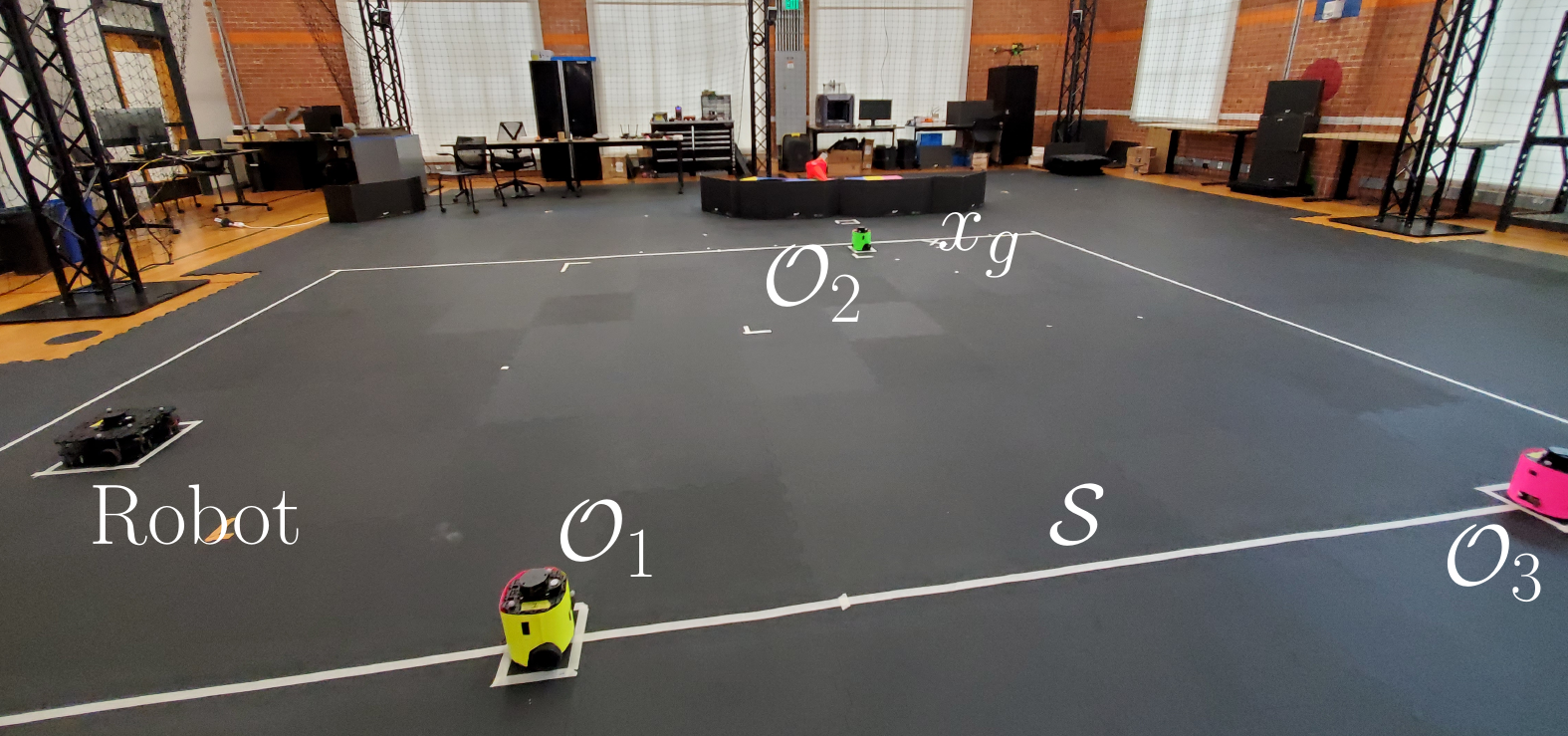}}
    \caption{Initial configuration of the hardware experiment.}
    \label{fig:hardware_init}
\end{figure}

\begin{figure*}[!ht]
    \centering
    \begin{subfigure}[t]{0.24\textwidth}
        \centering        
        \scalebox{1.4}{\includegraphics{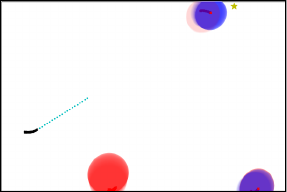}}
        \caption{$t=25$}
        \label{fig:hardware_25}
    \end{subfigure}
    \begin{subfigure}[t]{0.24\textwidth}
    \centering
        \scalebox{1.4}{\includegraphics{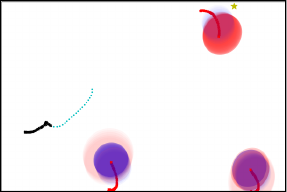}}
        \caption{$t=50$}
        \label{fig:hardware_50}
    \end{subfigure}
    \begin{subfigure}[t]{0.24\textwidth}
        \centering
        \scalebox{1.4}{
        \includegraphics{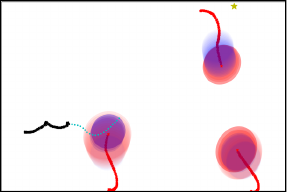}}
        \caption{$t=75$}
        \label{fig:hardware_75}
    \end{subfigure}
    \begin{subfigure}[t]{0.24\textwidth}
        \centering
        \scalebox{1.4}{
        \includegraphics{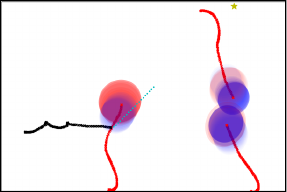}}
        \caption{$t=100$}
        \label{fig:hardware_100}
    \end{subfigure}
    \begin{subfigure}[t]{0.24\textwidth}
        \centering
        \scalebox{1.4}{
        \includegraphics{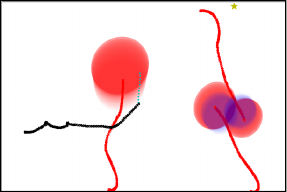}}
        \caption{$t=125$}
        \label{fig:hardware_125}
    \end{subfigure}
    \begin{subfigure}[t]{0.24\textwidth}
        \centering
        \scalebox{1.4}{
        \includegraphics{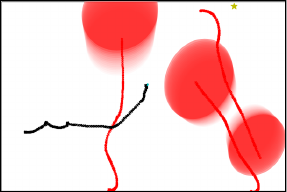}}
        \caption{$t=150$}
        \label{fig:hardware_150}
    \end{subfigure}
    \begin{subfigure}[t]{0.24\textwidth}
        \centering
        \scalebox{1.4}{
        \includegraphics{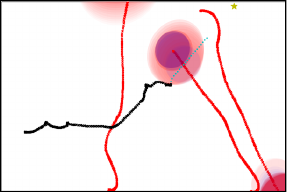}}
        \caption{$t=175$}
        \label{fig:hardware_175}
    \end{subfigure}
    \begin{subfigure}[t]{0.24\textwidth}
        \centering
        \scalebox{1.4}{
        \includegraphics{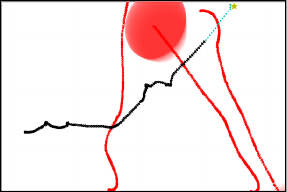}}
        \caption{$t=200$}
        \label{fig:hardware_200}
    \end{subfigure}
    \begin{subfigure}[t]{0.32\textwidth}
        \centering
        \scalebox{5}{
        \includegraphics{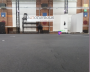}}
        \caption{Initial configuration, $t=1$}
        \label{fig:first_screenshot}
    \end{subfigure}
    \begin{subfigure}[t]{0.32\textwidth}
        \centering
        \scalebox{5}{
        \includegraphics{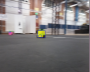}}
        \caption{Turning towards $\mathcal{O}_{1}$, $t=42$}
        \label{fig:middle_screenshot}
    \end{subfigure}
    \begin{subfigure}[t]{0.32\textwidth}
        \centering
        \scalebox{5}{
        \includegraphics{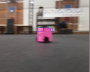}}
        \caption{Turning towards $\mathcal{O}_{2}$, $t=151$}
        \label{fig:last_screenshot}
    \end{subfigure}
    \caption{Trajectory followed by the robot when using velocity inputs output by the current iteration of the \algName{} motion planner. For both sets of figures, ``$t$" refers to the number of iterations of \algName{}.}
    \label{fig:plot_traj_hardware}
\end{figure*}

\begin{figure}
    \centering
    \scalebox{0.8}{
    \includegraphics{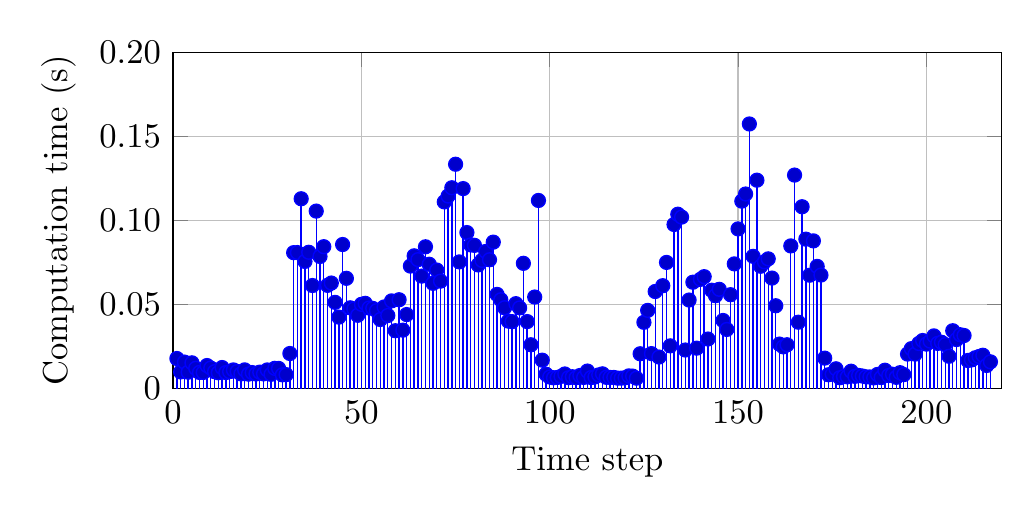}}
    \caption{Computation times for the \algName{} motion planner at each time step of the hardware experiment.}
    \label{fig:hardware_times}
\end{figure}

We now examine the feasibility of implementing the \algName{} motion planner in real-time hardware applications. Specifically, we consider the same robot and obstacle dynamics as in the previous example; however, rather than being given the next waypoint to visit, the robot is instead given the velocity and turning rate inputs $v[0]$ and $\omega[0]$ output from Algorithm 1. The robot then uses these inputs until the next iteration of \algName{} completes, and the process then repeats. Thus, each iteration of \algName{} must terminate sufficiently quickly such that the environment does not change too drastically before the completion of the subsequent iteration.  

The environment is displayed in Figure~\ref{fig:hardware_init}, where $x[0]=[-2.5,-0.75]^{\top}$ and $x_{g}=[1.9,1.9]^{\top}$. The set $\mathcal{S}$ is the $6$m$\times$$4$m rectangle centered at the origin. Once again, the obstacle parameters are provided in the Appendix. For the robot, we use a Turtlebot $3$ Waffle Pi connected to the Robot Operating System (ROS), setting $v[t]\in [0.01,0.25]$m/s and $\omega[t] \in [-\pi/2.5, \pi/2.5]$rad/s. Furthermore, the built-in Raspberry Pi camera allows a viewing angle of $\theta_{v}=1.09$rad. For this experiment, we set $dt=0.25$s use a planning horizon of $T=25$. All other parameters are identical to the previous example. Finally, we use a set of Turtlebot $3$ Burger Pis for the environment obstacles. To detect the obstacles at each sensing step in \algName, we map each one to a color and use OpenCV \cite{bradski2008learning} to search the acquired image for these colors. Specifically, we map $\mathcal{O}_{1}$ to yellow, $\mathcal{O}_{2}$ to green, and $\mathcal{O}_{3}$ to pink. If an obstacle is detected through this method, its precise location is provided using a motion capture system.

We run the experiment on a 2.6GHz Intel Core i7-9750H with 64 GB RAM\footnote{A video of the experiment is available at \url{https://youtu.be/GO8hFzfmb_0}.}. To reach the target state, the robot required a total of $217$ iterations of \algName{}. Figure~\ref{fig:plot_traj_hardware} plots the environment of the robot and its planned trajectory at several relevant time steps. At earlier time steps, the uncertainty in the position of obstacle $\mathcal{O}_{1}$ pushes the robot towards the top of $\mathcal{S}$. The robot then turns to observe $\mathcal{O}_{1}$ several times, as shown in Figures~\ref{fig:hardware_50} and~\ref{fig:hardware_75}, with the camera output shown in Figure~\ref{fig:middle_screenshot}. Afterwards, there is a brief period where the robot is able to travel unimpeded to the target state. Eventually, however, the uncertainty associated with obstacle $\mathcal{O}_{3}$ intersects the nominal trajectory of the robot. The robot then turns towards this obstacle, thereby reducing its uncertainty, as shown in Figure~\ref{fig:hardware_175} with camera output shown in Figure~\ref{fig:last_screenshot}. Afterwards, the robot is again able to travel unimpeded to the target state.
Figure~\ref{fig:hardware_times} displays the computation time per iteration of \algName{}. Note that all solve times are less than the $0.25$s time interval between discrete steps in~\eqref{eq:dubins_dyn}, indicating that \algName{} shows promise for online applications with collision-avoidance constraints.


\section{CONCLUSIONS}

We consider the problem of incorporating the relevance of an obstacle into the constrained sensor selection problem of a robot. 
To this end, we formulate the $\algName{}$ motion planner, wherein we use information about the dual variables of the linearized constraints corresponding to the ``keep-out" ellipsoids of each obstacle.
Through a sensitivity analysis, we show that measuring the state of the obstacle with the largest discounted sum of these dual variables should lead to the best gain in performance for the robot. We demonstrate the efficacy of the proposed motion planner in both software and hardware experiments, and find that \algName{} shows promise for future real-time hardware implementations.




\bibliographystyle{IEEEtran}
\bibliography{ref}

\begin{thebibliography}{10}
\providecommand{\url}[1]{#1}
\csname url@rmstyle\endcsname
\providecommand{\newblock}{\relax}
\providecommand{\bibinfo}[2]{#2}
\providecommand\BIBentrySTDinterwordspacing{\spaceskip=0pt\relax}
\providecommand\BIBentryALTinterwordstretchfactor{4}
\providecommand\BIBentryALTinterwordspacing{\spaceskip=\fontdimen2\font plus
\BIBentryALTinterwordstretchfactor\fontdimen3\font minus
  \fontdimen4\font\relax}
\providecommand\BIBforeignlanguage[2]{{%
\expandafter\ifx\csname l@#1\endcsname\relax
\typeout{** WARNING: IEEEtran.bst: No hyphenation pattern has been}%
\typeout{** loaded for the language `#1'. Using the pattern for}%
\typeout{** the default language instead.}%
\else
\language=\csname l@#1\endcsname
\fi
#2}}

\bibitem{bar2004estimation}
Y.~Bar-Shalom, X.~R. Li, and T.~Kirubarajan, \emph{Estimation with applications
  to tracking and navigation: theory algorithms and software}.\hskip 1em plus
  0.5em minus 0.4em\relax John Wiley \& Sons, 2004.

\bibitem{rasmussen2003gaussian}
C.~E. Rasmussen, ``Gaussian processes in machine learning,'' in \emph{Summer
  School on Machine Learning}.\hskip 1em plus 0.5em minus 0.4em\relax Springer,
  2003, pp. 63--71.

\bibitem{aoude2013probabilistically}
G.~S. Aoude, B.~D. Luders, J.~M. Joseph, N.~Roy, and J.~P. How,
  ``Probabilistically safe motion planning to avoid dynamic obstacles with
  uncertain motion patterns,'' \emph{Autonomous Robots}, vol.~35, no.~1, pp.
  51--76, 2013.

\bibitem{vinod2018probabilistic}
A.~P. Vinod and M.~M. Oishi, ``Probabilistic occupancy via forward stochastic
  reachability for {M}arkov jump affine systems,'' \emph{IEEE Transactions on
  Automatic Control}, 2020.

\bibitem{vinod2018stochastic}
A.~P. Vinod, S.~Rice, Y.~Mao, M.~M. Oishi, and B.~A{\c{c}}{\i}kme{\c{s}}e,
  ``Stochastic motion planning using successive convexification and
  probabilistic occupancy functions,'' in \emph{IEEE Conference on Decision and
  Control}, 2018, pp. 4425--4432.

\bibitem{boggs1995sequential}
P.~T. Boggs and J.~W. Tolle, ``Sequential quadratic programming,'' \emph{Acta
  numerica}, vol.~4, pp. 1--51, 1995.

\bibitem{ghosh2017joint}
S.~Ghosh and J.~Biswas, ``Joint perception and planning for efficient obstacle
  avoidance using stereo vision,'' in \emph{IEEE/RSJ International Conference
  on Intelligent Robots and Systems}, 2017, pp. 1026--1031.

\bibitem{Axelrod-RSS-17}
B.~Axelrod, L.~Kaelbling, and T.~Lozano-Perez, ``Provably safe robot navigation
  with obstacle uncertainty,'' in \emph{Proceedings of Robotics: Science and
  Systems}, Cambridge, Massachusetts, July 2017.

\bibitem{patil2014gaussian}
S.~Patil, Y.~Duan, J.~Schulman, K.~Goldberg, and P.~Abbeel, ``Gaussian belief
  space planning with discontinuities in sensing domains,'' in \emph{2014 IEEE
  International Conference on Robotics and Automation (ICRA)}.\hskip 1em plus
  0.5em minus 0.4em\relax IEEE, 2014, pp. 6483--6490.

\bibitem{tanaka2017lqg}
T.~Tanaka, P.~M. Esfahani, and S.~K. Mitter, ``{LQG} control with minimum
  directed information: Semidefinite programming approach,'' \emph{IEEE
  Transactions on Automatic Control}, vol.~63, no.~1, pp. 37--52, 2017.

\bibitem{bajcsy1988active}
R.~Bajcsy, ``Active perception,'' \emph{Proceedings of the IEEE}, vol.~76,
  no.~8, pp. 966--1005, 1988.

\bibitem{pmlr-v100-lee20b}
K.~Lee, G.~N. An, V.~Zakharov, and E.~A. Theodorou, ``Perceptual
  attention-based predictive control,'' in \emph{Proceedings of Machine
  Learning Research}, vol. 100, 2020, pp. 220--232.

\bibitem{du2011robot}
N.~E. Du~Toit and J.~W. Burdick, ``Robot motion planning in dynamic, uncertain
  environments,'' \emph{IEEE Transactions on Robotics}, vol.~28, no.~1, pp.
  101--115, 2011.

\bibitem{lavalle2006planning}
S.~M. LaValle, \emph{Planning algorithms}.\hskip 1em plus 0.5em minus
  0.4em\relax Cambridge university press, 2006.

\bibitem{stellato2020osqp}
B.~Stellato, G.~Banjac, P.~Goulart, A.~Bemporad, and S.~Boyd, ``{OSQP}: An
  operator splitting solver for quadratic programs,'' \emph{Mathematical
  Programming Computation}, vol.~12, no.~4, pp. 637--672, 2020.

\bibitem{boyd2004convex}
S.~Boyd and L.~Vandenberghe, \emph{Convex optimization}.\hskip 1em plus 0.5em
  minus 0.4em\relax Cambridge university press, 2004.

\bibitem{domahidi2013ecos}
A.~Domahidi, E.~Chu, and S.~Boyd, ``{ECOS}: An {SOCP} solver for embedded
  systems,'' in \emph{European Control Conference}, 2013, pp. 3071--3076.

\bibitem{andersson2019casadi}
J.~A. Andersson, J.~Gillis, G.~Horn, J.~B. Rawlings, and M.~Diehl, ``{CasADi}:
  a software framework for nonlinear optimization and optimal control,''
  \emph{Mathematical Programming Computation}, vol.~11, no.~1, pp. 1--36, 2019.

\bibitem{buskens2012esa}
C.~B{\"u}skens and D.~Wassel, ``The {ESA} {NLP} solver {WORHP},'' in
  \emph{Modeling and optimization in space engineering}.\hskip 1em plus 0.5em
  minus 0.4em\relax Springer, 2012, pp. 85--110.

\bibitem{shapiro2014lectures}
A.~Shapiro, D.~Dentcheva, and A.~Ruszczy{\'n}ski, \emph{Lectures on stochastic
  programming: modeling and theory}.\hskip 1em plus 0.5em minus 0.4em\relax
  SIAM, 2014.

\bibitem{bradski2008learning}
G.~Bradski and A.~Kaehler, \emph{Learning OpenCV: Computer vision with the
  OpenCV library}.\hskip 1em plus 0.5em minus 0.4em\relax " O'Reilly Media,
  Inc.", 2008.

\end{thebibliography}

\appendix

\subsection{Obstacle details for the first simulation.}

For the first simulation experiment, the initial positions of the obstacles are given by
\begin{align*}
    & x_{1}[0]=\begin{bmatrix}
        3.00 \\ 
        0.25 \\
        0.25
    \end{bmatrix},
    & x_{2}[0]=\begin{bmatrix}
        -2.00 \\ 
        -2.00 \\
        -2.00
    \end{bmatrix},
    \phantom{-}
    & x_{3}[0]=\begin{bmatrix}
        -1.25 \\ 
        -1.25 \\
        -2.50
    \end{bmatrix}, \\
    & x_{4}[0]=\begin{bmatrix}
        3.00 \\ 
        1.75 \\
        1.75
    \end{bmatrix},
    & x_{5}[0]=\begin{bmatrix}
        -2.75 \\ 
        2.75 \\
        0.0
    \end{bmatrix}.
\end{align*}
Furthermore, the obstacles have Gaussian disturbances $w_{o}$ have mean vectors
\begin{align*}
    &\mu_{w_1} = \begin{bmatrix}
        -0.20 \\
        0.00 \\
        0.00
    \end{bmatrix},
    \mu_{w_2}=\begin{bmatrix}
        0.00 \\
        0.00 \\
        0.00 \\
    \end{bmatrix},
    \phantom{-}
    \mu_{w_3}=\begin{bmatrix}
        0.05 \\ 
        0.05 \\
        0.15
    \end{bmatrix}, \\
    &\mu_{w_4}=\begin{bmatrix}
        -0.03 \\
        0.00 \\
        0.00
    \end{bmatrix},
    \phantom{-}\mu_{w_5}=\begin{bmatrix}
        0.00 \\ 
        0.00 \\
        0.00
    \end{bmatrix},
\end{align*}
and covariance matrices
\begin{align*}
    \Sigma_{w_1} & = \begin{bmatrix} 0.01 & 0.001 & 0.001 \\ 0.001 & 0.01 & 0.001 \\ 0.001 & 0.001 & 0.01 \end{bmatrix}, 
    \Sigma_{w_2} = 0.01 \cdot \mathbf{I}_{3}, \\
    \Sigma_{w_3} & = \begin{bmatrix} 0.0125 & 0.0015 & 0.0015 \\ 0.0015 & 0.0125 & 0.0015 \\ 0.0015 & 0.0015 & 0.0125  \end{bmatrix},
    \Sigma_{w_4} = 0.06 \cdot \mathbf{I}_{3} \\
    \Sigma_{w_5} & = 0.01 \cdot \mathbf{I}_{3}.
\end{align*}

\subsection{Obstacle details for the second simulation.}

For the second simulation experiment, the initial positions of the obstacles are given by
\begin{align*}
    & x_{1}[0]=\begin{bmatrix}
        -2.00 \\ 
        2.00
    \end{bmatrix},
    & x_{2}[0]=\begin{bmatrix}
        -0.50 \\ 
        -2.00 
    \end{bmatrix},
    \phantom{-}
    & x_{3}[0]=\begin{bmatrix}
        2.75 \\ 
        -1.75 
    \end{bmatrix}.
\end{align*}
Furthermore, the obstacles have Gaussian disturbances $w_{o}$ have mean vectors
\begin{align*}
    &\mu_{w_1} = \begin{bmatrix}
        0.00 \\
        0.00
    \end{bmatrix},
    \phantom{-}
    \mu_{w_2}=\begin{bmatrix}
        0.05 \\
        0.15
    \end{bmatrix},
    \phantom{-}
    \mu_{w_3}=\begin{bmatrix}
        -0.1 \\ 
        0.0
    \end{bmatrix},
\end{align*}
and covariance matrices
\begin{align*}
    \Sigma_{w_1} & = \begin{bmatrix} 0.025 & 0.001 \\ 0.001 & 0.025 \end{bmatrix}, 
    \Sigma_{w_2} = \begin{bmatrix} 0.006 & 0.0015 \\ 0.0015 & 0.008 \end{bmatrix}, \\
    \Sigma_{w_3} & = \begin{bmatrix} 0.001 & 0.0 \\ 0.0 & 0.001 \end{bmatrix}.
\end{align*}

\subsection{Obstacle details for the hardware experiment.}

For the hardware experiment, the initial positions of the obstacles are given by
\begin{align*}
    & x_{1}[0]=\begin{bmatrix}
        -0.75 \\ 
        -2.00
    \end{bmatrix},
    & x_{2}[0]=\begin{bmatrix}
        1.20 \\ 
        1.80 
    \end{bmatrix},
    \phantom{-}
    & x_{3}[0]=\begin{bmatrix}
        2.25 \\ 
        -2.00 
    \end{bmatrix}.
\end{align*}
Furthermore, the obstacles have Gaussian disturbances $w_{o}$ have mean vectors
\begin{align*}
    &\mu_{w_1} = \begin{bmatrix}
        0.00 \\
        0.20
    \end{bmatrix},
    \phantom{-}
    \mu_{w_2}=\begin{bmatrix}
        0.05 \\
        -0.10
    \end{bmatrix},
    \phantom{-}
    \mu_{w_3}=\begin{bmatrix}
        -0.06 \\ 
        0.10
    \end{bmatrix},
\end{align*}
and covariance matrices
\begin{align*}
    \Sigma_{w_1} & = \begin{bmatrix} 0.008 & 0.001 \\ 0.001 & 0.008 \end{bmatrix}, 
    \Sigma_{w_2} = \begin{bmatrix} 0.004 & 0.0015 \\ 0.0015 & 0.005 \end{bmatrix}, \\
    \Sigma_{w_3} & = \begin{bmatrix} 0.008 & 0.0025 \\ 0.0025 & 0.0125 \end{bmatrix}.
\end{align*}

\end{document}